\documentclass[aps,prl,twocolumn,superscriptaddress,nofootinbib]{revtex4-2}
 
\usepackage[utf8]{inputenc}
\usepackage[T1]{fontenc}
\usepackage{lmodern}
\usepackage{amsmath,amssymb,amsthm,mathtools,bm}
\usepackage{graphicx}
\usepackage[dvipsnames]{xcolor}
\usepackage[colorlinks=true, linkcolor=MidnightBlue, citecolor=MidnightBlue, urlcolor=MidnightBlue]{hyperref}
\usepackage{bbm}
\usepackage{mathrsfs}
\usepackage{braket}
\usepackage{ytableau}
\ytableausetup{boxsize=0.55em}

\usepackage{tikz}
\usetikzlibrary{decorations.pathreplacing,calc}
 
\newtheorem{theorem}{Theorem}
\newtheorem{lemma}[theorem]{Lemma}
\newtheorem{proposition}[theorem]{Proposition}

\newtheorem{corollary}[theorem]{Corollary}

\DeclareMathOperator{\conv}{conv}

\DeclareMathOperator{\supp}{supp}
 
\newcommand{\cH}{\mathcal H}

\newcommand{\cY}{\mathcal Y}
\newcommand{\cQ}{\mathcal Q}
\newcommand{\cP}{\mathcal P}

\newcommand{\sqtwo}{s_2}
\newcommand{\proj}[1]{\ket{#1}\!\bra{#1}}
\newcommand{\Prob}{\mathsf{Prob}}
\newcommand{\Up}{\mathsf{Up}}
\newcommand{\Down}{\mathsf{Down}}
\newcommand{\up}{\mathcal{U}}
\newcommand{\down}{\mathcal{D}}
 
\begin{document}

\title{Irreducible Architectures of Multipartite Entanglement}

\author{Augusto Smerzi}
\email{augustosmerzi@sztu.edu.cn}
\affiliation{College of Engineering Physics, Shenzhen Technology University, Shenzhen 518118, China}
\author{Manuel Gessner}
\email{manuel.gessner@uv.es}
\affiliation{Instituto de F\'isica Corpuscular (IFIC), CSIC-Universitat de Val\`encia and Departament de F\'isica Te\`orica, UV, Avinguda Vicent Andr\'es Estell\'es 19, E-46100 Burjassot (Valencia), Spain}

\date{\today}

\begin{abstract}
Multipartite entanglement is commonly characterized by scalar notions
such as separability and entanglement depth, which do not resolve the
distribution of entangled cluster sizes. For mixed states, we introduce
formation profiles that assign weights to the entanglement
architectures appearing in pure-state decompositions. We show that after discarding
every profile that admits a strictly weaker feasible replacement, the
remaining irreducible structure need not be unique: a four-qubit
example exhibits a continuous family of incomparable irreducible
profiles. Monotone functions of the architectures recover widely used scalar
quantifiers as special cases, whereas the full profile geometry retains
additional information, including the minimum weight that every
decomposition must assign outside a chosen architectural class. Finally,
we derive experimentally accessible bounds on these weights from convex
witnesses, including the quantum Fisher information, thereby connecting
detailed formation structure with practical entanglement certification.
\end{abstract}

\maketitle

\textbf{Introduction.} Multipartite entanglement is a characteristic feature of interacting quantum
systems and a resource for quantum sensing, computation, communication,
and networks
\cite{GuhnePhysRep2009,PezzeRMP2018,FriisNatRevPhys2019}.
Today's experiments are able to generate and certify multipartite
entanglement across a broad range of platforms,
including ultracold atoms, trapped ions, superconducting qubits, and
networked quantum nodes
\cite{ZhangPRL2023,BohnetScience2016,CaoNature2023,
PompiliScience2021}.
A remaining central challenge is to determine which entangled
cluster-size patterns are required to form a mixed state, beyond a single
separability or depth label.

Partitions provide an intuitive description of this architecture.
Young diagrams represent a partition by rows whose lengths give the
cluster sizes~\cite{SzalayJPA2018,SzalayQuantum2019}. For four parties, for example,
\((4)\simeq \ydiagram{4}\) represents a single four-party cluster with genuine multipartite entanglement
(GME),
\((3,1)\simeq \ydiagram{3,1}\) a three-party cluster
with a spectator, and
\((2,2)\simeq \ydiagram{2,2}\) two entangled pairs. The refinement order between such diagrams organizes
partial separability and related notions, including
\(k\)-separability, \(k\)-producibility, entanglement depth,
stretchability, and more general diagram-based classifications~\cite{DuerPRL1999,SzalayJPA2018,SzalayQuantum2019,SzalayQuantum2025}.
Optimizable witnesses can probe multipartite partition structure
without full state tomography~\cite{LuPRX2018}, while the QFI accesses
selected Young-diagram properties through collective observables
\cite{PezzeSmerziPRL2009,HyllusPRA2012,TothPRA2012,RenPRL2021}.

For pure states, the cluster architecture is unambiguous: every state
has a unique finest tensor-product factorization. The situation is
fundamentally different for mixed states. The same density matrix can
be formed from pure states with inequivalent architectures, and there
need not be a preferred decomposition. Existing classifications
largely address this ambiguity through membership questions, whether
a state admits a decomposition within a prescribed family
\cite{SzalayJPA2018}, or by assigning a single scalar value to every
diagram and minimizing a corresponding depth or average cost
\cite{SzalayQuantum2025}. Such projections are powerful,
but they discard how different architectures can be combined within
one decomposition and whether independently optimal descriptions are
mutually compatible.

Here, we characterize a mixed state by the complete set of exact
cluster-size weight distributions attainable in its pure-state
decompositions. Removing profiles that admit a strictly weaker feasible
replacement yields a Pareto frontier of irreducible architectures, which
need not consist of a single profile. A four-qubit state already exhibits
 a continuous trade-off between the incomparable architectures \((2,2)\)
and \((3,1)\), while every formation requires an unavoidable exact-\((4)\)
contribution. We then show that convex witnesses constrain these
distributions and
convert membership tests into quantitative lower bounds on the fraction
of every decomposition that must lie outside a chosen architectural
class. For the QFI, we evaluate these experimentally accessible bounds
for one-axis-twisted spin states. Our approach therefore resolves the
joint trade-off between inequivalent formation architectures, going
beyond scalar depth and membership criteria.

\textbf{Exact sectors and formation profiles.} We consider a finite-dimensional \(N\)-partite Hilbert space
\(\cH_N=\bigotimes_{j=1}^N\cH^{(j)}\) and denote by
\(\mathcal S(\cH_N)\) its density-operator state space. A Young diagram
\(\Lambda=(\lambda_1,\ldots,\lambda_{h(\Lambda)})\vdash N\) is an
integer partition of \(N\), with
\(\lambda_1\ge\cdots\ge\lambda_{h(\Lambda)}\ge1\) and
\(\sum_i\lambda_i=N\). Its height \(h(\Lambda)\) is the number of
clusters, while its width \(\lambda_1\) is the size of the largest
cluster and underlies the usual notion of entanglement depth. We denote
the finite set of all such diagrams by \(\cY_N\). A Young diagram
records only the sizes of the exact tensor factors, not which labeled
parties belong to them.

Every pure state \(\ket{\psi}\in\cH_N\) has a unique finest partition of
the parties into tensor-product factors; the ordered block sizes define
its exact Young diagram \(\Lambda_\psi\), as shown in the Supplemental
Material~\cite{SM}. For fixed \(\Lambda\), we define the exact pure sector and its
convex hull as
\begin{align}
\mathscr P_\Lambda
=
\{\proj{\psi}:\Lambda_\psi=\Lambda\},
\qquad
\mathfrak E_\Lambda
=
\conv(\mathscr P_\Lambda).
\label{eq:exact-sector}
\end{align}
A state in \(\mathfrak E_\Lambda\) can be formed entirely from pure
states of exact type \(\Lambda\), but it need not itself retain that
factorization structure after mixing. For example,
\begin{align}
\frac{1}{2}\left(\proj{\Phi^+}+\proj{\Phi^-}\right)
=
\frac{1}{2}\left(\proj{00}+\proj{11}\right),
\end{align}
so a mixture of two exact-\((2)\) Bell states can be fully separable.

Starting from any pure-state decomposition of \(\rho\), one can group
together all components with the same exact Young diagram. This yields
\begin{align}
\rho
=
\sum_{\Lambda\in\cY_N}
q_\Lambda\rho_\Lambda,
\qquad
\rho_\Lambda\in\mathfrak E_\Lambda.
\label{eq:sector-grouping}
\end{align}
Here \(q_\Lambda\ge0\) is the total weight assigned to architecture
\(\Lambda\), with \(\sum_\Lambda q_\Lambda=1\). The vector
\(q=(q_\Lambda)_{\Lambda\in\cY_N}\in\Prob(\cY_N)\) is therefore a
probability distribution over the Young diagrams, which we call an
exact-architecture profile of \(\rho\).

A given decomposition produces only one such profile, whereas the same
mixed state generally admits many inequivalent decompositions and hence
many different profiles. We therefore define
\begin{align}
\cQ(\rho)
=
\{
q\in\Prob(\cY_N):
\rho=
\sum_{\Lambda\in\cY_N}
q_\Lambda\rho_\Lambda,\
\rho_\Lambda\in\mathfrak E_\Lambda
\},
\label{eq:Qex}
\end{align}
which collects all exact-architecture profiles compatible with
\(\rho\). The set \(\cQ(\rho)\) is convex, since mixing two
decompositions of the same state mixes their profiles in the same
proportions.

\textbf{Profile order and irreducible architecture.} Young diagrams are partially ordered by refinement
\cite{SzalayJPA2018,SzalayQuantum2019,SzalayQuantum2025}. We write
\(\Lambda\preceq\Gamma\) if \(\Lambda\) can be obtained from \(\Gamma\)
by splitting one or more clusters. Thus \(\Lambda\) is architecturally
no stronger than \(\Gamma\): it contains the same parties arranged into
smaller irreducible clusters. For example,
\((2,2)\preceq(4)\) and \((3,1)\preceq(4)\), whereas \((2,2)\) and
\((3,1)\) are incomparable.

We denote by \(\Up(\cY_N)\) and \(\Down(\cY_N)\) the families of
subsets of \(\cY_N\) that are upward- and downward-closed under the
refinement order, respectively: if
\(\Lambda\in \up\in\Up(\cY_N)\), then every
\(\Gamma\succeq\Lambda\) also belongs to \(\up\), with the order reversed
for \(\down\in\Down(\cY_N)\). The profile order is then
\begin{align}
q\preceq r
\quad\Longleftrightarrow\quad
\sum_{\Lambda\in \up}q_\Lambda
\leq
\sum_{\Lambda\in \up}r_\Lambda
\quad
\forall\,\up\in\Up(\cY_N).
\label{eq:profile-order}
\end{align}
Hence \(q\preceq r\) means that \(q\) never places more weight
than \(r\) in any upward region of architecture space. In this
sense, \(q\) requires no stronger multipartite cluster structure than
\(r\).

The irreducible architecture of \(\rho\) is obtained by discarding every
feasible profile that can be replaced by a strictly weaker one. The
remaining set is
\begin{align}
\cP(\rho)
=
\left\{
q\in\cQ(\rho):
\nexists\,r\in\cQ(\rho)
\text{ with }r\prec q
\right\}.
\label{eq:Pareto}
\end{align}
 As proved in the Supplemental Material, every feasible profile
\(q\in\cQ(\rho)\) admits a Pareto-minimal replacement
\(r\in\cP(\rho)\) with \(r\preceq q\); hence the
frontier is nonempty. Since the diagram order is only partial, there
need not be a unique weakest profile. The resulting set of non-improvable solutions is known
in multiobjective optimization as a Pareto frontier
\cite{MiettinenBOOK1999}. It contains all formation profiles that are minimal with respect to the
full architecture order, without imposing an arbitrary ranking
between incomparable Young diagrams.

Standard multipartite classifications arise by assigning a monotone scalar
function \(f:\cY_N\to\mathbb R_{\geq0}\) to each diagram, with
\(f(\Lambda)\leq f(\Gamma)\) whenever
\(\Lambda\preceq\Gamma\). Important examples include
\(f(\Lambda)=N-h(\Lambda)\), associated with
\(k\)-separability
\cite{DuerPRL1999,GuhneNJP2010},
\(f(\Lambda)=\lambda_1\), leading to
\(k\)-producibility or entanglement depth
\cite{SorensenPRL2001,PezzeRMP2018}, as well as more general functions that
probe the full diagram, such as the squareability
\(f(\Lambda)=\sqtwo(\Lambda)=\sum_i\lambda_i^2\)
\cite{SzalayQuantum2019,SzalayQuantum2025}.
Each such choice projects the full profile geometry onto one scalar
scale and may therefore order diagrams that are incomparable at the
architectural level.

For mixed states, each \(f\) gives rise to two natural scalar extensions~\cite{SzalayQuantum2025}:
the minimum average value in a formation and the minimum largest value
that must appear
\begin{align}
F_f(\rho)
&=
\inf_{q\in\cP(\rho)}
\sum_{\Lambda\in\cY_N}q_\Lambda f(\Lambda),
\label{eq:formation}
\\
D_f(\rho)
&=
\inf_{q\in\cP(\rho)}
\max_{\Lambda:q_\Lambda>0}f(\Lambda).
\label{eq:depth}
\end{align}
Here, the usual pure-state decomposition definitions are reduced to
optimizations over the Pareto frontier; the equivalence is shown in the
Supplemental Material~\cite{SM}. The frontier therefore contains all scalar
formation- and depth-type quantifiers as projections, while retaining
the trade-offs between inequivalent architectures that any single
\(f\) necessarily discards. Determining it remains a challenging mixed-state
decomposition problem, but the examples below show that nontrivial
frontier structure already arises in very small systems.

\textbf{A four-qubit state with a continuous frontier.} Consider the orthogonal states
\begin{align}
\ket{A}
&=
\ket{\Phi^+}_{12}\otimes\ket{\Phi^+}_{34},
&
\ket{B}
&=
\ket{\mathrm{GHZ}^+}_{123}\otimes\ket{-}_4,
\end{align}
of exact types \((2,2)\) and \((3,1)\), respectively, and
\begin{align}
\rho_{\rm P}
=
\frac{1}{2}\proj{A}
+
\frac{1}{2}\proj{B}
+
\frac{1}{4}\left(\ket{A}\!\bra{B}+\ket{B}\!\bra{A}\right).
\label{eq:rhoP}
\end{align}
Here \(\ket{\Phi^+}=(\ket{00}+\ket{11})/\sqrt2\),
\(\ket{\mathrm{GHZ}^+}=(\ket{000}+\ket{111})/\sqrt2\), and
\(\ket{-}=(\ket0-\ket1)/\sqrt2\).

As shown in the Supplemental Material~\cite{SM}, the support of \(\rho_{\rm P}\)
contains only the exact sectors \((2,2)\), \((3,1)\), and \((4)\).
Writing \(q=(x,y,1-x-y)\), with \(x=q_{(2,2)}\),
\(y=q_{(3,1)}\), and \(z=1-x-y=q_{(4)}\), its Pareto frontier is given by all $q$ with $0\le x,y\le\frac{3}{8}$ satisfying
\begin{align}
\left(\frac{1}{2}-x\right)
\left(\frac{1}{2}-y\right)
=
\frac{1}{16}.
\label{eq:frontier-curve}
\end{align}
Thus the irreducible architecture is a continuous family of
incomparable formations trading exact \((2,2)\) against exact
\((3,1)\) weight, see Fig.~\ref{fig:four-qubit-frontier}. Every frontier
profile satisfies \(q_{(4)}\ge1/2\), with equality at \(x=y=1/4\).
Since every feasible profile admits a no-stronger Pareto-minimal
representative, and this replacement cannot increase the exact-\((4)\)
weight, the same lower bound holds for every formation of \(\rho_{\rm P}\):
at least one-half of its total weight must be assigned to pure genuinely
four-partite entangled states.

\begin{figure}[t]
\centering
\includegraphics[width=\columnwidth]{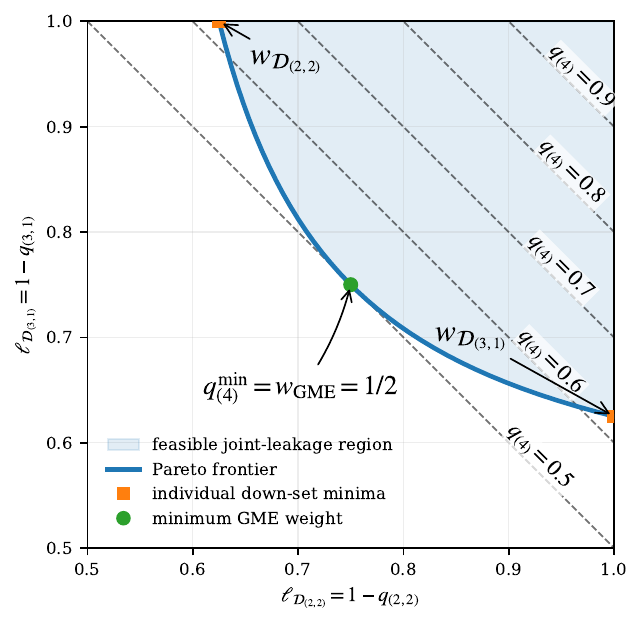}
\caption{
Joint-leakage geometry of \(\rho_{\rm P}\). The horizontal and vertical
axes are
\(\ell_{\down_{(2,2)}}=1-q_{(2,2)}\) and
\(\ell_{\down_{(3,1)}}=1-q_{(3,1)}\), respectively.
The shaded region contains all feasible leakage pairs, while the solid
boundary is their Pareto frontier. Its endpoints separately minimize
the two incomparable leakages, and no single decomposition attains both
minima. The symmetric point satisfies
\(q_{(2,2)}=q_{(3,1)}=1/4\) and minimizes the exact-\((4)\) weight,
yielding \(q_{(4)}^{\min}=1/2\).
}
\label{fig:four-qubit-frontier}
\end{figure}

\textbf{Architectural weights.} The profile geometry contains more information than the scalar formation
and depth quantifiers introduced above. As the preceding example already
illustrates, it can quantify how much formation weight must lie outside
a prescribed architectural class. For any nonempty down-set
\(\down\in\Down(\cY_N)\), we define
\begin{align}
w_\down(\rho)
=
\inf_{q\in\cQ(\rho)}
\sum_{\Lambda\notin \down}q_\Lambda.
\label{eq:downset-weight}
\end{align}
This equals the resource weight relative to
\(\mathcal F_\down=\conv(\bigcup_{\Lambda\in \down}\mathscr P_\Lambda)\)
\cite{DucuaraPRL2020}, and therefore generalizes the best separable
approximation~\cite{LewensteinPRL1998}. Equivalently, it
is the smallest \(p\) such that
\(\rho=(1-p)\sigma_\down+p\tau\), with
\(\sigma_\down\in\mathcal F_\down\) and \(\tau\) an arbitrary state;
moreover, \(w_\down(\rho)=0\) exactly when
\(\rho\in\mathcal F_\down\)~\cite{SM}.

Scalar depth thresholds are contained as the special cases
\begin{align}\label{eq:fsdownset}
\down_{f,s}=\{\Lambda\in\cY_N:f(\Lambda)\leq s\},
\qquad s\in f(\cY_N),
\end{align}
which are down-sets associated with an architectural monotone \(f\). The
corresponding tail weight \(w_{\down_{f,s}}\) quantifies how much
formation weight above \(s\) is unavoidable, whereas \(D_f\) records
only whether all such weight can be eliminated. More general down-sets need not coincide with threshold sets of any of
the standard scalar monotones, such as \(N-h(\Lambda)\), width, or squareability. For the biseparable down-set
\(\down_{\rm bisep}=\cY_N\setminus\{(N)\}\), it identifies the GME
weight \(w_{\rm GME}:=w_{\down_{\rm bisep}}\).

The profile set also retains compatibility information that is lost
when several weights are optimized independently.
For nonempty down-sets \(\bm{\down}=(\down_1,\ldots,\down_m)\), define the jointly attainable leakage region
\begin{align}
\mathcal L_{\bm \down}(\rho)
=
\left\{
\bigl(\ell_{\down_1}(q),\ldots,\ell_{\down_m}(q)\bigr):
q\in\cQ(\rho)
\right\},
\label{eq:joint-leakage-region}
\end{align}
with $\ell_\down(q)=\sum_{\Lambda\notin \down}q_\Lambda$. Each \(w_{\down_a}\) is the minimum of one coordinate, but these minima may
belong to different decompositions and need not be jointly attainable.
The region \(\mathcal L_{\bm \down}(\rho)\), or its Pareto boundary, records
their complete trade-off.

The leakage region provides a projection of the typically high-dimensional profile set. The shaded region in Fig.~\ref{fig:four-qubit-frontier} shows
\(\mathcal L_{(\down_{(2,2)},\down_{(3,1)})}(\rho_{\rm P})\), where we write \(\down_\Lambda=\{\Gamma\in\cY_N:\Gamma\preceq\Lambda\}\) for the principal down-set generated by \(\Lambda\).
Its two endpoints separately maximize the weights of the incomparable
architectures \((2,2)\) and \((3,1)\), and hence minimize the
corresponding down-set leakages. No single decomposition realizes both
optima. The symmetric point instead minimizes the weight outside the
full biseparable down-set and gives
\(w_{\rm GME}(\rho_{\rm P})=q_{(4)}^{\min}=1/2\). The same frontier
therefore contains both the individual architectural weights and their
joint compatibility.

\textbf{Witness constraints.} Let \(W\) be a real-valued convex
witness functional with finite exact-sector 
upper bounds
\(W_{\max}(\Lambda)=
\sup_{\sigma\in\mathfrak E_\Lambda}W[\sigma]\). Every feasible profile
then satisfies
\begin{align}
W[\rho]
\leq
\sum_{\Lambda\in\cY_N}
q_\Lambda W_{\max}(\Lambda).
\label{eq:witness-profile}
\end{align}
Thus an experimentally observed value excludes all
profiles whose weighted sector upper bounds are incompatible with the
measurement. Equation~\eqref{eq:witness-profile} yields several useful criteria for
different aspects of the profile geometry.
First, if
\(W_{\max}(\Lambda)\leq a+b f(\Lambda)\), with \(b>0\), then the formation-type quantifier~(\ref{eq:formation}) is bounded by
\begin{align}
F_f(\rho)
\geq
\max\left\{
0,\frac{W[\rho]-a}{b}
\right\}.
\label{eq:formation-witness-bound}
\end{align}

Second, for any nonempty \(\down\in\Down(\cY_N)\), defining
\(W_\down=\max_{\Lambda\in\down}W_{\max}(\Lambda)\) leads to the membership criterion \(W[\rho]>W_\down
\Rightarrow
\rho\notin\mathcal F_\down\).
For the scalar-generated down-sets~(\ref{eq:fsdownset}), this becomes the usual
depth criterion
\(W[\rho]>W_{\down_{f,s}}\Rightarrow D_f(\rho)>s\).

Finally, the same threshold can be made quantitative~\cite{BrandaoPRA2005,SunPRL2024,MatheQuantum2026}. When \(W_\down<W_*<\infty\), with
\(W_*=\max_{\Lambda\in\cY_N}W_{\max}(\Lambda)\), separating
the profile weight inside and outside \(\down\) yields
\begin{align}
w_\down(\rho)
\geq
\max\left\{
0,\frac{W[\rho]-W_\down}{W_*-W_\down}
\right\}.
\label{eq:general-weight-bound}
\end{align}
Thus \(W_\down\) determines whether the class is excluded, whereas the
additional global upper bound \(W_*\) converts the excess above this
threshold into a lower bound on the unavoidable formation weight
outside \(\down\). Formation, depth, and architectural weight therefore
arise as distinct consequences of the same constraint~(\ref{eq:witness-profile}) on the feasible
profile set.

\textbf{Quantum Fisher information.} We now apply this construction to
the QFI, a standard metrological witness of multipartite entanglement
\cite{HyllusPRA2012,TothPRA2012,PezzeRMP2018,RenPRL2021}.
For a collective generator
\(\hat H=\sum_{j=1}^N h^{(j)}\), where every local term has spectral
width \(\Delta h>0\), the exact-sector upper bound is determined by the
squareability,
\(F_{Q,\max}(\Lambda)
=(\Delta h)^2\sqtwo(\Lambda)\), where
\(\sqtwo(\Lambda)=\sum_i\lambda_i^2\). The bound is attained by products
of GHZ-type states built from local extremal eigenvectors on the
corresponding clusters~\cite{RenPRL2021}. The general witness results then
recover the known formation bound
\(F_{\sqtwo}(\rho)\geq F_Q[\rho,\hat H]/(\Delta h)^2\) and the
associated squareability-depth criteria \cite{SzalayQuantum2025}.

For any nonempty proper down-set
 \(\down\in\Down(\cY_N)\), let
\(s_\down=\max_{\Lambda\in\down}\sqtwo(\Lambda)\). Since then
\(s_\down<N^2\), Eq.~\eqref{eq:general-weight-bound} gives
\begin{align}
w_\down(\rho)
\geq
\max\left\{
0,
\frac{F_Q[\rho,\hat H]/(\Delta h)^2-s_\down}{N^2-s_\down}
\right\}.
\label{eq:QFI-downset-weight}
\end{align}
Crossing the threshold
\(F_Q[\rho,\hat H]/(\Delta h)^2>s_\down\) excludes
\(\mathcal F_\down\); the excess above it quantifies the minimum weight
that every formation must assign outside this class. Because the QFI upper bound is fixed entirely by squareability, it
distinguishes down-sets only through \(s_\down\). Thus \(\down\) and
the generally larger scalar-generated down-set
\(\down_{\sqtwo,s_\down}\) yield the same QFI exclusion criterion, although their actual architectural
weights may differ.

\begin{figure}[t]
\centering
\includegraphics[width=\columnwidth]{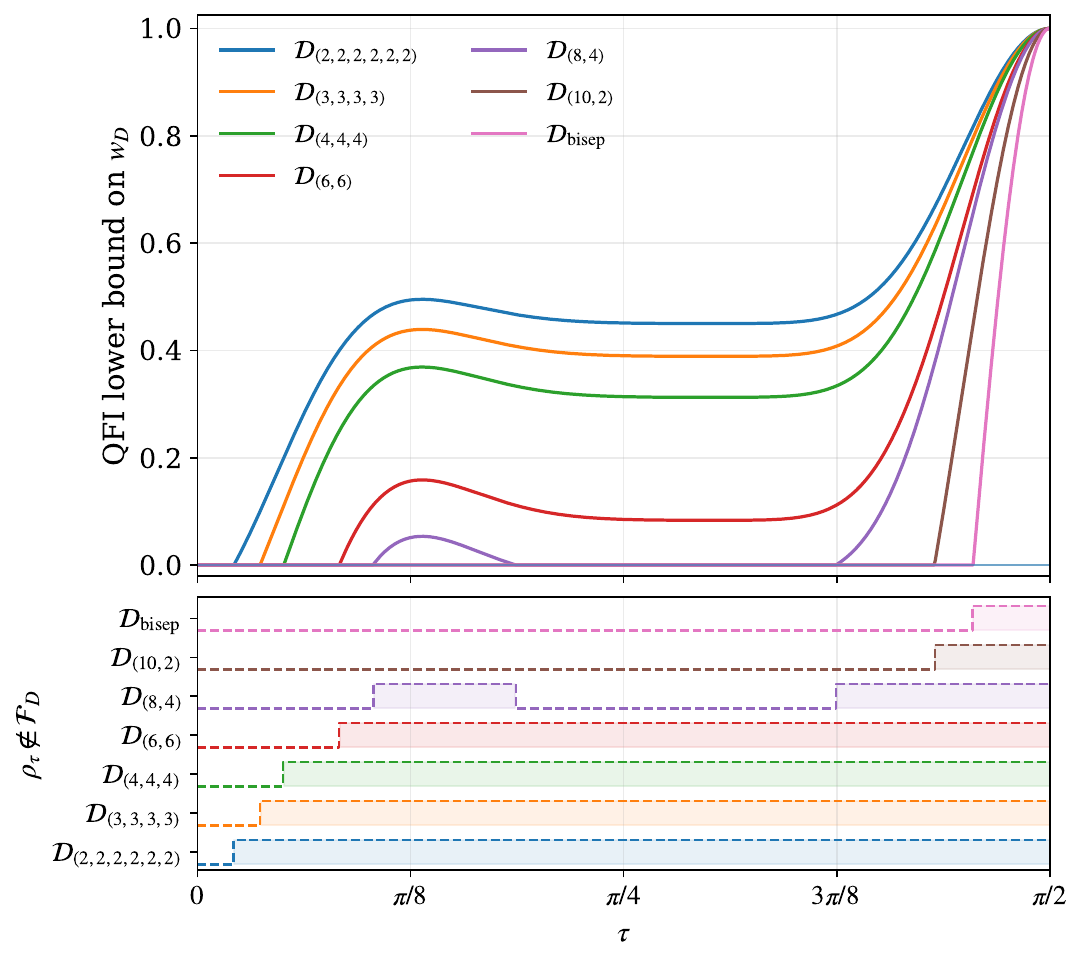}
\caption{
QFI lower bounds on architectural weights during ideal one-axis twisting
for \(N=12\). The upper panel shows QFI lower bounds on the unavoidable weight outside
representative down-sets, while the lower panel retains only the
corresponding exclusion thresholds. At \(\tau=\pi/2\), the state has
exact architecture \((12)\), and every proper-down-set weight equals
one.
}
\label{fig:oat-weights}
\end{figure}

Figure~\ref{fig:oat-weights} illustrates the quantitative weight bounds for the
one-axis-twisted states
\cite{KitagawaPRA1993,PezzeRMP2018},
\(\ket{\psi_\tau}=e^{-i\tau J_z^2}\ket{+x}^{\otimes N}\), where
\(J_z=\frac{1}{2}\sum_j\sigma_z^{(j)}\) and
\(\ket{+x}^{\otimes N}\) is the maximal-\(J_x\) eigenstate. The QFI is optimized over collective rotation
directions; details are given in the Supplemental Material~\cite{SM}. The lower panel shows when each architectural class is excluded, while
the upper panel uses the same QFI to quantify the unavoidable weight
outside it. This includes principal down-sets generated by incomparable
diagrams, such as \((4,4,4)\) and \((6,6)\), rather than only successive
levels of one scalar hierarchy. The curve for
\(\down_{\rm bisep}\) bounds
\(w_{\rm GME}=w_{\down_{\rm bisep}}\). The pure GHZ-type state at \(\tau=\pi/2\) has exact architecture \((N)\), so its weight outside every proper
down-set is one, explaining the common endpoint of all curves.

The exact QFI need not be determined experimentally to use this method: any certified lower
bound can be inserted into Eq.~\eqref{eq:QFI-downset-weight}. In the Supplemental Material, we apply the same construction to
nonlinear spin-squeezing bounds, which have been measured in
one-axis-twisted superconducting-qubit states
\cite{XuPRL2022}, and study their behavior under realistic dephasing noise~\cite{BaamaraPRL2021,BaamaraCRPhys2022}.

\textbf{Conclusion.} We introduced a formation-level description of multipartite
entanglement architecture based on feasible probability profiles over
exact Young-diagram sectors. Its Pareto frontier removes all
architecturally dominated formations while retaining the trade-offs
between incomparable cluster structures. The four-qubit example shows
that this irreducible architecture can be a continuous family rather
than a single diagram, scalar depth, or optimal decomposition.

Down-sets define architectural free classes, and their resource weights
quantify the fraction of every formation that must lie outside them.
Several such weights may be individually attainable but mutually
incompatible, a distinction retained by the joint profile geometry.
Convex witnesses constrain this geometry and convert standard
membership thresholds into quantitative formation bounds. The
QFI provides experimentally accessible bounds through squareability,
with conventional formation and depth criteria recovered as scalar
projections.

Our results isolate the cluster-size architecture layer of multipartite
entanglement. Standard notions such as \(k\)-separability,
\(k\)-producibility, stretchability, and squareability arise by
projecting the profile geometry onto a scalar function. The present
theory does not identify which labeled parties occupy a cluster, nor the
amount or type of entanglement within a cluster. Extending the profile
to labeled partitions and state-internal entanglement data would provide
a finer formation-level description.

\begin{acknowledgments}
A.S. acknowledges support by the National Natural Science Foundation of China (Grant No. W2531008) and the Peacock Plan. This work was funded by the project PID2023-152724NA-I00, with funding from MCIU/AEI/10.13039/501100011033 and FSE+, by the project CNS2024-154818 with funding by MICIU/AEI /10.13039/501100011033, by the project CIPROM/2022/66 with funding by the Generalitat Valenciana, 
and by the CSIC Interdisciplinary Thematic Platform (PTI+) on Quantum Technologies (PTI-QTEP+). This work is supported through the project CEX2023-001292-S funded by MCIU/AEI.
\end{acknowledgments}

\clearpage
\onecolumngrid

\section*{Supplemental Material}

\setcounter{section}{0}
 \setcounter{equation}{0}
\setcounter{figure}{0}
\setcounter{theorem}{0}
 \renewcommand{\thesection}{S\arabic{section}}
 \renewcommand{\theHsection}{S\arabic{section}}
\renewcommand{\theequation}{S\arabic{equation}}
\renewcommand{\theHequation}{S\arabic{equation}}
\renewcommand{\thefigure}{S\arabic{figure}}
\renewcommand{\theHfigure}{S\arabic{figure}}
 \renewcommand{\thetheorem}{S\arabic{theorem}}
\renewcommand{\theHtheorem}{S\arabic{theorem}}

\section{Pure-state factorization and architecture sectors}

A labeled partition \(\pi\) of the parties is a factorization partition of \(\ket\psi\) if \(\ket\psi\) factorizes as a tensor product over the blocks of \(\pi\). The associated Young diagram is the ordered list of block sizes.

\begin{lemma}[Unique finest factorization]\label{lma:1}
Every pure state \(\ket\psi\in\cH_N\) has a unique finest factorization partition. Hence it has a unique exact Young diagram \(\Lambda_\psi\).
\end{lemma}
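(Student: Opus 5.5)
The plan is to show that the set of factorization partitions of a nonzero $\ket\psi$ is closed under common refinement (meet). Since the set of labeled partitions of $\{1,\dots,N\}$ is finite and contains the trivial one-block partition, which always factorizes $\ket\psi$, the meet of all factorization partitions will then itself be a factorization partition. It is the unique finest one, and its sorted block sizes define $\Lambda_\psi$ unambiguously.

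\emph{Step 1: a characterization of factorization over a bipartition.} For a subset $S$ of parties, $\ket\psi$ factorizes as $\ket{\alpha}_S\otimes\ket{\beta}_{\bar S}$ iff its Schmidt rank across $S|\bar S$ is one. Equivalently, the reduced state $\rho_S=\mathrm{Tr}_{\bar S}\proj{\psi}$ is pure. A partition $\pi$ is a factorization partition iff $\ket\psi$ factorizes across $B|\bar B$ for every block $B\in\pi$.
\begin{itemize}
\item One direction is immediate.
\item For the converse, I peel off blocks inductively. If $\ket\psi=\ket{\alpha_B}\otimes\ket{\chi}$, then any further bipartite product structure $B'|\overline{B'}$ with $B'\subseteq\bar B$ transfers to $\ket\chi$. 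This holds because $\rho_{B'}$ is computed from $\ket\chi$ alone.
\end{itemize}

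\emph{Step 2: closure of splitting sets under intersection.} This is the key lemma. Suppose $\ket\psi$ factorizes across $S|\bar S$ and across $T|\bar T$. I will show it factorizes across $(S\cap T)\,|\,\overline{S\cap T}$.

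Write $\ket\psi=\ket{\alpha}_S\otimes\ket{\beta}_{\bar S}$. Factorization across $T$ means $\rho_T$ is pure, and
\[
\rho_T=\rho^{\alpha}_{S\cap T}\otimes\rho^{\beta}_{\bar S\cap T},
\]
since $\ket\alpha$ and $\ket\beta$ live on disjoint parties. A tensor product of two density operators is pure only if both factors are pure, so $\rho^\alpha_{S\cap T}$ is pure. Hence $\ket\alpha=\ket{a}_{S\cap T}\otimes\ket{a'}_{S\setminus T}$, and $\ket\psi=\ket a_{S\cap T}\otimes(\ket{a'}\otimes\ket\beta)$, as claimed. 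Complements of splitting sets are splitting sets, so the family of splitting sets is also closed under unions and differences. It is therefore a Boolean subalgebra of the power set of parties.

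\emph{Step 3: the finest partition is a factorization partition.} Let $\pi_1,\pi_2$ be factorization partitions. Every block $B_1\cap B_2$ of the meet $\pi_1\wedge\pi_2$ is an intersection of splitting sets, so by Step 2 it is a splitting set. By Step 1, $\pi_1\wedge\pi_2$ is then a factorization partition. Iterating over the finite family gives a finest element $\pi_\psi$, whose blocks are the atoms of the Boolean algebra from Step 2. Uniqueness is automatic, since two finest elements would each refine their meet and hence coincide with it. Therefore $\Lambda_\psi$, the sorted list of block sizes of $\pi_\psi$, is well defined.

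I expect the main obstacle to be Step 2, specifically the claim that a tensor product of states is pure only if both factors are. I would prove this via purity: $\mathrm{Tr}(\rho\otimes\sigma)^2=\mathrm{Tr}\rho^2\,\mathrm{Tr}\sigma^2=1$ forces $\mathrm{Tr}\rho^2=\mathrm{Tr}\sigma^2=1$. The remaining concern is the converse direction of Step 1, where care is needed that local vectors are nonzero, which holds since $\ket\psi\neq0$.
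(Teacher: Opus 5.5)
Your proposal is correct and follows essentially the same route as the paper. Your Step 2 is the paper's key argument: purity of the marginal on a block of one factorization partition, combined with its product splitting induced by the other partition, forces the marginal on the intersection to be pure. The additional pieces, namely the peeling argument in Step 1 (purity of all block marginals implies factorization) and the Boolean-algebra/atom picture, spell out what the paper simply asserts rather than changing the approach.
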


\begin{proof}
If \(\ket\psi\) factorizes with respect to two labeled partitions \(\pi\) and \(\pi'\), then its reduced density matrices on the blocks of both partitions are pure: Indeed, let \(C=A\cap B\) be a block of \(\pi\wedge\pi'\), where \(A\in\pi\)
and \(B\in\pi'\). Write \(A=CX\) and \(B=CY\), and denote all
remaining parties by \(R\), as illustrated in Fig.~\ref{fig:common-refinement}.
The two factorizations can then be written as
\(\ket\psi=\ket\alpha_{CX}\otimes\ket\beta_{YR}
=\ket\gamma_{CY}\otimes\ket\delta_{XR}\). The first decomposition implies
that \(\rho_{CX}\) is pure. The second decomposition implies, after
tracing out \(Y\) and \(R\), that \(\rho_{CX}=\rho_C\otimes\rho_X\).
Thus \(\rho_{CX}\) is a pure product state, and hence \(\rho_C\) is
pure. Since this holds for every block \(C\) of the common refinement,
all blocks of \(\pi\wedge\pi'\) have pure reduced states. Because the
global state is pure, this is equivalent to factorization with respect
to \(\pi\wedge\pi'\). Iterating over all factorization partitions gives a unique finest one. Ordering the block sizes of this finest partition gives \(\Lambda_\psi\).
\end{proof}

\begin{figure}[h]
\centering
\begin{tikzpicture}[
    box/.style={draw, minimum height=0.55cm, minimum width=1.05cm, inner sep=0pt},
    active/.style={box, fill=gray!18},
    guide/.style={gray!60, thick},
    brace/.style={decorate, decoration={brace, amplitude=4pt}},
    font=\small
]

\def\yTop{1.2}
\def\yBot{0}

\node at (-0.7,\yTop) {\(\pi\)};
\node at (-0.7,\yBot) {\(\pi'\)};

\node[box] (tRone) at (0,\yTop) {\(R_1\)};
\node[active, anchor=west] (tX) at (tRone.east) {\(X\)};
\node[active, anchor=west] (tC) at (tX.east) {\(C\)};
\node[box, anchor=west] (tY) at (tC.east) {\(Y\)};
\node[box, anchor=west] (tRtwo) at (tY.east) {\(R_2\)};

\node[box] (bRone) at (0,\yBot) {\(R_1\)};
\node[box, anchor=west] (bX) at (bRone.east) {\(X\)};
\node[active, anchor=west] (bC) at (bX.east) {\(C\)};
\node[active, anchor=west] (bY) at (bC.east) {\(Y\)};
\node[box, anchor=west] (bRtwo) at (bY.east) {\(R_2\)};

\draw[brace]
  ($(tX.north west)+(0,0.16)$) -- ($(tC.north east)+(0,0.16)$)
  node[midway, above=5pt] {\(A=CX\)};

\draw[brace]
  ($(bY.south east)+(0,-0.16)$) -- ($(bC.south west)+(0,-0.16)$)
  node[midway, below=5pt] {\(B=CY\)};

\draw[guide] ($(tC.north west)+(0,0.45)$) -- ($(bC.south west)+(0,-0.45)$);
\draw[guide] ($(tC.north east)+(0,0.45)$) -- ($(bC.south east)+(0,-0.45)$);

\node[fill=white, inner sep=1.5pt] at ($(tC.south)!0.5!(bC.north)$) {\(C=A\cap B\)};

\end{tikzpicture}
\caption{In the factorization partition \(\pi\), the
shaded block is \(A=CX\). In the factorization partition \(\pi'\), the
shaded block is \(B=CY\). Their intersection \(C=A\cap B\) is a block
of the common refinement \(\pi\wedge\pi'\).
}
\label{fig:common-refinement}
\end{figure}

The exact sectors \(\mathscr P_\Lambda=
\{\proj{\psi}:\Lambda_\psi=\Lambda\}\) need not be closed, since a
sequence of states with exact architecture \(\Lambda\) may converge to
a state that factorizes further and therefore has an exact diagram
\(\Gamma\prec\Lambda\). For example, the two-qubit states
\begin{align}
\ket{\psi_\varepsilon}
=
\frac{\ket{00}+\varepsilon\ket{11}}
{\sqrt{1+\varepsilon^2}},
\qquad
\varepsilon>0,
\end{align}
have exact diagram \((2)\) for every \(\varepsilon>0\), but converge as
\(\varepsilon\to0\) to the product state \(\ket{00}\), whose exact
diagram is \((1,1)\).

It is therefore useful to enlarge each exact sector to the set of all
pure states compatible with the corresponding factorization. For
\(\Lambda=(\lambda_1,\ldots,\lambda_{h(\Lambda)})\in\cY_N\), let
\(\mathscr S_\Lambda\) denote the pure states that factorize according
to some partition of the labeled parties into blocks of sizes
\(\lambda_1,\ldots,\lambda_{h(\Lambda)}\). Unlike \(\mathscr P_\Lambda\), the set \(\mathscr S_\Lambda\) also
contains states that factorize further. By Lemma~\ref{lma:1}, every
pure state has a unique exact diagram, and hence
\begin{align}
\mathscr S_\Lambda
=
\bigcup_{\Gamma\preceq\Lambda}\mathscr P_\Gamma.
\label{eq:SM-compatible-sector}
\end{align}

\begin{lemma}[Compact compatible sectors]
\label{lma:SM-compatible-sector-compact}
For every \(\Lambda\in\cY_N\), the set
\(\mathscr S_\Lambda\) is compact.
\end{lemma}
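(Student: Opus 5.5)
The plan is to write \(\mathscr S_\Lambda\) as a finite union of continuous images of compact sets. Then compactness follows immediately, since \(\mathscr S_\Lambda\) lives in the finite-dimensional space of operators on \(\cH_N\), where compact means closed and bounded.

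\textbf{Step 1 (decomposition by labeled partition).} Fix a labeled partition \(\pi=\{B_1,\ldots,B_h\}\) of the parties whose block sizes, after ordering, equal \(\Lambda\). There are finitely many such \(\pi\). Let \(\mathscr S_\pi\) be the set of projectors \(\proj{\psi}\) with \(\ket\psi=\bigotimes_{i}\ket{\phi_i}_{B_i}\). By definition, \(\mathscr S_\Lambda=\bigcup_\pi \mathscr S_\pi\), and a finite union of compact sets is compact. It therefore suffices to prove that each \(\mathscr S_\pi\) is compact.

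\textbf{Step 2 (continuous parametrization).} Let \(\Sigma_i\) be the unit sphere of \(\cH_{B_i}=\bigotimes_{j\in B_i}\cH^{(j)}\). Each \(\Sigma_i\) is compact because the space is finite-dimensional. Consider the map
\begin{align}
\Phi_\pi:\Sigma_1\times\cdots\times\Sigma_h\to\mathcal S(\cH_N),\qquad
(\phi_1,\ldots,\phi_h)\mapsto \Pi_\pi\Bigl(\bigotimes_i\proj{\phi_i}\Bigr)\Pi_\pi^\dagger ,
\end{align}
where \(\Pi_\pi\) is the fixed unitary permutation that reorders the tensor factors from block order to the standard party order. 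This map is polynomial in the coordinates, hence continuous. Its image is exactly \(\mathscr S_\pi\), because a normalized product vector can always be written with normalized factors (rescale each factor by its norm). A continuous image of a compact product of spheres is compact, so \(\mathscr S_\pi\) is compact.

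\textbf{Step 3 (remarks).} The main point to check is that the image is precisely \(\mathscr S_\pi\). This includes states that factorize further, which is consistent with Eq.~\eqref{eq:SM-compatible-sector}. In particular, no exact-type restriction appears here, and that is exactly why closedness holds, in contrast to \(\mathscr P_\Lambda\). Boundedness is automatic, since every element has trace norm one. I expect no real obstacle. The only care needed is to make the party relabeling explicit, so that the parametrization is a fixed continuous map for each \(\pi\).
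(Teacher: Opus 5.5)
Your proof is correct and follows essentially the same route as the paper: you write \(\mathscr S_\Lambda\) as a finite union over labeled partitions of shape \(\Lambda\), each piece being the image of a product of compact unit spheres under the continuous tensor-product map. Your version is simply more explicit about the party-reordering unitary and about why the image is exactly the set of product projectors.
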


\begin{proof}
For any fixed partition of the parties with shape \(\Lambda\), the
normalized state of each subsystem ranges over a compact set, since
the Hilbert spaces are finite dimensional. The corresponding product
states are obtained by a continuous tensor-product map and therefore
also form a compact set. Since only finitely many labeled partitions
have shape \(\Lambda\), their union \(\mathscr S_\Lambda\) is compact.
\end{proof}
Consequently, a convergent
sequence of pure states with exact diagram \(\Lambda\) can only
converge to a state with exact diagram
\(\Gamma\preceq\Lambda\).

The same compactness extends to arbitrary down-sets. If
\(\down\in\Down(\cY_N)\), then
\begin{align}
\bigcup_{\Gamma\in\down}\mathscr P_\Gamma
=
\bigcup_{\Lambda\in\down}\mathscr S_\Lambda.
\label{eq:SM-finite-union-product-sets}
\end{align}
Indeed, if \(\Lambda\in\down\), every pure state in
\(\mathscr S_\Lambda\) has an exact diagram
\(\Gamma\preceq\Lambda\), and hence \(\Gamma\in\down\). Conversely,
every pure state whose exact diagram is \(\Gamma\in\down\) belongs to
\(\mathscr S_\Gamma\). Since \(\cY_N\) is finite, the right-hand side
is a finite union of compact sets and is therefore compact.

\section{Feasible and irreducible architecture profiles}

We first record the relation between feasible profiles and
pure-state decompositions. Every finite pure-state decomposition
\begin{align}
\rho=\sum_\alpha p_\alpha\proj{\psi_\alpha}
\end{align}
induces the exact-architecture profile
\begin{align}
q_\Lambda
=
\sum_{\alpha:\Lambda_{\psi_\alpha}=\Lambda}p_\alpha.
\label{eq:SM-profile-from-decomp}
\end{align}
Grouping all components with the same exact diagram gives
\begin{align}
\rho
&=
\sum_{\Lambda\in\cY_N}q_\Lambda\rho_\Lambda,
\\
\rho_\Lambda
&=
\frac{1}{q_\Lambda}
\sum_{\alpha:\Lambda_{\psi_\alpha}=\Lambda}
p_\alpha\proj{\psi_\alpha}
\in\mathfrak E_\Lambda,
\label{eq:SM-grouping}
\end{align}
where sectors with \(q_\Lambda=0\) are omitted. Hence
\(q\in\cQ(\rho)\).

Conversely, if \(q\in\cQ(\rho)\), then
\begin{align}
\rho
=
\sum_{\Lambda\in\cY_N}q_\Lambda\rho_\Lambda,
\qquad
\rho_\Lambda\in\mathfrak E_\Lambda.
\label{eq:SM-feasible-profile}
\end{align}
Since
\(\mathfrak E_\Lambda=\conv(\mathscr P_\Lambda)\), every sector state
admits a finite decomposition
\begin{align}
\rho_\Lambda
=
\sum_\beta p_{\beta|\Lambda}
\proj{\psi_{\beta|\Lambda}},
\qquad
\Lambda_{\psi_{\beta|\Lambda}}=\Lambda.
\label{eq:SM-sector-decomp}
\end{align}
Substitution into Eq.~\eqref{eq:SM-feasible-profile} gives a
pure-state decomposition of \(\rho\) whose exact-architecture profile
is \(q\). Thus, \(\cQ(\rho)\) is precisely the set of profiles induced
by pure-state decompositions of \(\rho\).

We next show that every feasible profile dominates an irreducible one.

\begin{theorem}[Pareto domination]
\label{thm:SM-Pareto-domination}
For every \(q\in\cQ(\rho)\), there exists
\(r\in\cP(\rho)\) such that \(r\preceq q\).
\end{theorem}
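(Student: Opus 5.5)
The plan is to select a Pareto-minimal element by minimizing a strictly order-monotone linear functional over the set of feasible profiles below \(q\), and to guarantee that the minimum is attained by a compactness argument based on Lemma~\ref{lma:SM-compatible-sector-compact}.

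\emph{Step 1: the functional.} For each \(\Lambda\in\cY_N\) let \(c(\Lambda)\) be the number of up-sets \(\up\in\Up(\cY_N)\) containing \(\Lambda\). Define
\begin{align}
\varphi(r)=\sum_{\up\in\Up(\cY_N)}\sum_{\Lambda\in\up}r_\Lambda=\sum_\Lambda c(\Lambda)r_\Lambda .
\end{align}
By Eq.~\eqref{eq:profile-order}, \(s\preceq r\) implies \(\varphi(s)\le\varphi(r)\). Equality \(\varphi(s)=\varphi(r)\) together with \(s\preceq r\) forces equality of the sums over every up-set. For any \(\Lambda\), both \(\up_\Lambda=\{\Gamma:\Gamma\succeq\Lambda\}\) and \(\up_\Lambda\setminus\{\Lambda\}\) are up-sets. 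Subtracting the two sums gives \(s_\Lambda=r_\Lambda\), so \(s=r\). Hence \(s\prec r\) implies \(\varphi(s)<\varphi(r)\).

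\emph{Step 2: attainment.} Let \(A=\{r\in\cQ(\rho):r\preceq q\}\), which contains \(q\), and let \(m=\inf_A\varphi\). Take a minimizing sequence \(r^{(n)}\in A\). By the correspondence between profiles and decompositions established above, together with Carathéodory's theorem in the finite-dimensional space of Hermitian operators, each sector state can be written with at most \(K=(\dim\cH_N)^2+1\) pure components. Thus \(\rho=\sum_\Lambda\sum_{\beta\le K}p^{(n)}_{\beta|\Lambda}\proj{\psi^{(n)}_{\beta|\Lambda}}\), with \(\psi^{(n)}_{\beta|\Lambda}\in\mathscr P_\Lambda\subseteq\mathscr S_\Lambda\). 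All these data live in a compact set, because \(\mathscr S_\Lambda\) is compact by Lemma~\ref{lma:SM-compatible-sector-compact}. Passing to a subsequence, the weights and vectors converge, and the profiles converge to some \(\bar r\) with \(\varphi(\bar r)=m\). Moreover \(\bar r\preceq q\), since the defining inequalities are closed.

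The limit decomposition still represents \(\rho\). However, its components labelled \(\Lambda\) lie in \(\mathscr S_\Lambda=\bigcup_{\Gamma\preceq\Lambda}\mathscr P_\Gamma\) by Eq.~\eqref{eq:SM-compatible-sector}, not necessarily in \(\mathscr P_\Lambda\). Regrouping these components by their true exact diagrams yields a genuine profile \(r^*\in\cQ(\rho)\), obtained from \(\bar r\) by moving weight from \(\Lambda\) to diagrams \(\Gamma\preceq\Lambda\). Such moves never increase the sum over an up-set: if \(\Gamma\in\up\), then \(\Lambda\in\up\) as well. Hence \(r^*\preceq\bar r\preceq q\), so \(r^*\in A\) and \(\varphi(r^*)\le m\). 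Therefore \(\varphi(r^*)=m\) and the minimum is attained.

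\emph{Step 3: minimality.} If some \(s\in\cQ(\rho)\) satisfied \(s\prec r^*\), then by transitivity \(s\preceq q\), so \(s\in A\). Step 1 would then give \(\varphi(s)<m\), a contradiction. Hence \(r^*\in\cP(\rho)\) and \(r^*\preceq q\), which proves the theorem.

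The main obstacle is Step 2. The set \(\cQ(\rho)\) need not be closed, because the exact sectors are not closed, as the \(\ket{\psi_\varepsilon}\) example shows. The key observation is that limits only degrade toward finer diagrams, and the order \(\preceq\) is compatible with exactly this direction.
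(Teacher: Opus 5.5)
Your proof is correct and follows essentially the same route as the paper: the same potential summing all up-set weights, Carath\'eodory plus compactness of the sectors $\mathscr S_\Lambda$ to extract a limiting decomposition, and regrouping of limit components into finer exact diagrams, which can only lower up-set weights. The differences are cosmetic. You use a minimizing sequence in $A$ rather than minimizing over the closure of $K_q$ and then showing that minimizer is feasible. You also verify strict monotonicity of $\varphi$ explicitly, via the up-sets $\{\Gamma:\Gamma\succeq\Lambda\}$ and their strict parts, a step the paper leaves implicit.
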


\begin{proof}
Since the exact sectors need not be closed, existence of a minimal
feasible profile is not immediate; we prove it by minimizing over the
closure and showing that the minimizer is feasible. Let \(q\in\cQ(\rho)\) and consider the set
\begin{align}
K_q
=
\{s\in\cQ(\rho):s\preceq q\},
\end{align}
which is nonempty because $q\in K_q$. Its closure \(\overline{K_q}\) in the finite-dimensional profile
simplex \(\Prob(\cY_N)\) is compact. On this closure, define
\begin{align}
\Phi(s)
=
\sum_{\up\in\Up(\cY_N)\setminus\{\varnothing\}}
\sum_{\Lambda\in\up}s_\Lambda.
\label{eq:SM-Pareto-potential}
\end{align}
This is a continuous function. Moreover,
\(s\prec t\) implies \(\Phi(s)<\Phi(t)\), since every up-set weight of
\(s\) is no larger than the corresponding weight of \(t\), and at
least one is strictly smaller. Hence \(\Phi\) attains its minimum on
\(\overline{K_q}\) at some profile \(x\).

By the definition of the closure, there is a sequence
\(q^{(n)}\in K_q\) converging to \(x\). Since
\(K_q\subseteq\cQ(\rho)\), every \(q^{(n)}\) is realized by a
decomposition of the same state \(\rho\). Let \(d=\dim\cH_N\).
Carathéodory's theorem applied within each exact sector gives at most
\(d^2\) pure states per sector. After padding with zero-weight terms,
we may therefore write
\begin{align}
\rho
&=
\sum_{\Lambda\in\cY_N}
\sum_{j=1}^{d^2}
p_{\Lambda j}^{(n)}\Pi_{\Lambda j}^{(n)},
\label{eq:SM-uniform-decompositions}
\qquad
q_\Lambda^{(n)}
=
\sum_{j=1}^{d^2}p_{\Lambda j}^{(n)},
\end{align}
where \(\Pi_{\Lambda j}^{(n)}\in\mathscr S_\Lambda\), and every
positive-weight projector belongs to
\(\mathscr P_\Lambda\subseteq\mathscr S_\Lambda\).

There are only finitely many indexed coefficients and projectors.
Since \([0,1]\) and every \(\mathscr S_\Lambda\) are compact,
successively extracting subsequences gives a common subsequence along
which
\begin{align}
p_{\Lambda j}^{(n)}
\longrightarrow p_{\Lambda j},
\qquad
\Pi_{\Lambda j}^{(n)}
\longrightarrow\Pi_{\Lambda j}\in\mathscr S_\Lambda
\end{align}
for every \((\Lambda,j)\). Relabeling this subsequence by \(n\), we
still have \(q^{(n)}\to x\).
Because the sums contain a fixed finite number of terms, their limits
can be taken term by term:
\begin{align}
\rho
&=
\lim_{n\to\infty}
\sum_{\Lambda,j}
p_{\Lambda j}^{(n)}\Pi_{\Lambda j}^{(n)}
=
\sum_{\Lambda,j}
p_{\Lambda j}\Pi_{\Lambda j},
\label{eq:SM-limiting-decomposition}
\qquad
x_\Lambda
=
\sum_{j=1}^{d^2}p_{\Lambda j}.
\end{align}

Let \(r\) be the actual exact-architecture profile of the limiting
decomposition. Since a limiting projector originally labeled by
\(\Lambda\) lies in \(\mathscr S_\Lambda\), its exact diagram is some
\(\Gamma\preceq\Lambda\). Reassigning its weight from \(\Lambda\) to
\(\Gamma\) cannot increase the weight of any up-set, and hence
\(r\preceq x\). Moreover, taking \(n\to\infty\) in the up-set inequalities defining
\(q^{(n)}\preceq q\) gives \(x\preceq q\), since every up-set weight is
a finite sum of profile components. Since
\(r\in\cQ(\rho)\) and \(r\preceq x\preceq q\), we have \(r\in K_q\).

Since \(r\preceq x\), inequality \(r\neq x\) would imply
\(r\prec x\), and hence \(\Phi(r)<\Phi(x)\), contradicting the
minimality of \(x\) on \(\overline{K_q}\). Therefore \(r=x\), so
\(x\in\cQ(\rho)\). If \(x\notin\cP(\rho)\), there would exist
\(s\in\cQ(\rho)\) with \(s\prec x\). Since \(x\preceq q\), this gives
\(s\in K_q\) and \(\Phi(s)<\Phi(x)\), again a contradiction. Hence
\(x\in\cP(\rho)\) and \(x\preceq q\), as claimed.
\end{proof}

Since \(\cQ(\rho)\) is nonempty, the theorem also establishes that
\(\cP(\rho)\) is nonempty.

\section{Scalar projections and architectural weights}
\label{sec:mixed-state-constructions}

\subsection{Scalar monotones, convex roofs and depths}
Let
\begin{align}
f:\cY_N\longrightarrow\mathbb R_{\geq0}
\end{align}
be an architectural monotone, that is,
\begin{align}
\Lambda\preceq\Gamma
\quad\Longrightarrow\quad
f(\Lambda)\leq f(\Gamma).
\label{eq:SM-f-monotone}
\end{align}
For a pure state \(\ket{\psi}\), the unique exact diagram
\(\Lambda_\psi\) therefore assigns the scalar value
\(f(\Lambda_\psi)\). Different diagrams may nevertheless have the same
value of \(f\), so this scalar description is generally a projection of
the full architecture rather than a complete characterization of it.

For mixed states, the absence of a preferred pure-state decomposition
leads to several inequivalent extensions. We first consider the
formation-type convex-roof construction
\begin{align}
F_f(\rho)
=
\inf_{
\rho=\sum_\alpha p_\alpha\proj{\psi_\alpha}}
\sum_\alpha p_\alpha f(\Lambda_{\psi_\alpha}),
\label{eq:SM-convex-roof}
\end{align}
which quantifies the smallest average \(f\)-value required in any
formation of \(\rho\). A second possibility is the depth-type
construction
\begin{align}
D_f(\rho)
=
\inf_{
\rho=\sum_\alpha p_\alpha\proj{\psi_\alpha}}
\max_{\alpha:p_\alpha>0}
f(\Lambda_{\psi_\alpha}),
\label{eq:SM-depth-pure-decomp}
\end{align}
which asks for the smallest value \(s\) such that \(\rho\) can be
formed without using any pure component with \(f(\Lambda_\psi)>s\).
These constructions recover, for suitable choices of \(f\),
formation- and depth-type multipartite entanglement quantifiers
\cite{SzalayQuantum2025}.

By the profile--decomposition equivalence established in the previous
section, these optimizations can be written directly over feasible
profiles.
For the formation quantity,
\(\sum_\alpha p_\alpha f(\Lambda_{\psi_\alpha})
=
\sum_{\Lambda\in\cY_N}
q_\Lambda f(\Lambda)
\),
and hence
\begin{align}
F_f(\rho)
=
\inf_{q\in\cQ(\rho)}
\sum_{\Lambda\in\cY_N}
q_\Lambda f(\Lambda).
\label{eq:SM-formation-Q}
\end{align}
Similarly, \(\max_{\alpha:p_\alpha>0}f(\Lambda_{\psi_\alpha})
=
\max_{\Lambda:q_\Lambda>0}f(\Lambda)\), which gives
\begin{align}
D_f(\rho)
=
\inf_{q\in\cQ(\rho)}
\max_{\Lambda:q_\Lambda>0}f(\Lambda).
\label{eq:SM-depth-Q}
\end{align}

By Theorem~\ref{thm:SM-Pareto-domination}, for every
\(q\in\cQ(\rho)\) there exists
\(r\in\cP(\rho)\) such that
   \(r\preceq q\). To see how this affects the formation cost, let \(
t_1<t_2<\cdots<t_L
\)
be the distinct values taken by \(f\) on \(\cY_N\), and define \(\up_k
=
\{\Lambda\in\cY_N:f(\Lambda)\geq t_k\}\).
Monotonicity of \(f\) implies that every \(\up_k\in\Up(\cY_N)\) is indeed an up-set.
For any profile \(q\),
\begin{align}
\sum_{\Lambda}q_\Lambda f(\Lambda)
=
t_1+
\sum_{k=2}^{L}
(t_k-t_{k-1})
\sum_{\Lambda\in\up_k}q_\Lambda.
\label{eq:SM-layer-cake}
\end{align}
Consequently, by definition of the profile order, \(r\preceq q\) implies that
\begin{align}
\sum_\Lambda r_\Lambda f(\Lambda)
\leq
\sum_\Lambda q_\Lambda f(\Lambda).
\label{eq:SM-formation-order-monotonicity}
\end{align}

The largest occupied \(f\)-value is also monotone under the profile
order. Indeed, suppose that \(\max_{\Lambda:r_\Lambda>0}f(\Lambda)
>
\max_{\Lambda:q_\Lambda>0}f(\Lambda)\). Choosing a threshold \(t_k\) strictly above the maximum occupied by
\(q\), but not above the maximum occupied by \(r\), would give \(\sum_{\Lambda\in\up_k}r_\Lambda>0\), and \(\sum_{\Lambda\in\up_k}q_\Lambda=0\), contradicting \(r\preceq q\). Therefore,
\begin{align}
\max_{\Lambda:r_\Lambda>0}f(\Lambda)
\leq
\max_{\Lambda:q_\Lambda>0}f(\Lambda).
\label{eq:SM-depth-order-monotonicity}
\end{align}

Every feasible profile can thus be replaced by a Pareto-minimal profile
with no larger formation or depth cost. Equations
\eqref{eq:SM-formation-Q} and \eqref{eq:SM-depth-Q} consequently become
\begin{align}
F_f(\rho)
&=
\inf_{q\in\cP(\rho)}
\sum_{\Lambda\in\cY_N}
q_\Lambda f(\Lambda),
\label{eq:SM-formation-P}
\\
D_f(\rho)
&=
\inf_{q\in\cP(\rho)}
\max_{\Lambda:q_\Lambda>0}f(\Lambda).
\label{eq:SM-depth-P}
\end{align}
Thus the Pareto frontier contains all formation- and depth-type scalar
quantifiers as monotone projections of the full profile geometry.

\subsection{Down-set classes and depth thresholds}

The depth quantity \(D_f\) answers a membership question: it identifies
the smallest threshold \(s\) for which the state can be formed using
only architectures satisfying \(f(\Lambda)\leq s\). To make this
connection precise, and to prepare the quantitative tail weights used
in the main text, we first associate a convex state class with every
architectural down-set.

Let \(\down\in\Down(\cY_N)\) be nonempty and define
\begin{align}
\mathcal F_\down
=
\conv\left(
\bigcup_{\Lambda\in\down}\mathscr P_\Lambda
\right).
\label{eq:SM-free-class}
\end{align}
Thus, \(\mathcal F_\down\) consists exactly of the states that admit a
pure-state decomposition whose exact diagrams all belong to
\(\down\). We restrict to down-sets because they represent architectural free
classes: whenever an architecture is allowed, every weaker refinement
is allowed as well. Lemma~\ref{lma:SM-compatible-sector-compact} then
has the following mixed-state consequence.

\begin{corollary}[Compact down-set classes]
\label{cor:SM-free-class-compact}
In a finite-dimensional Hilbert space, \(\mathcal F_\down\) is convex
and compact.
\end{corollary}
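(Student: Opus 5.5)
Convexity is immediate from the definition: \(\mathcal F_\down\) is a convex hull. The real content is compactness, and the plan is to reduce it to the compactness of a generating set of pure states. By Eq.~\eqref{eq:SM-finite-union-product-sets}, the generating set can be rewritten as
\begin{align}
\bigcup_{\Gamma\in\down}\mathscr P_\Gamma=\bigcup_{\Lambda\in\down}\mathscr S_\Lambda.
\end{align}
Each \(\mathscr S_\Lambda\) is compact by Lemma~\ref{lma:SM-compatible-sector-compact}, and \(\cY_N\) is finite, so this set, call it \(G_\down\), is a compact subset of the finite-dimensional real vector space of Hermitian operators on \(\cH_N\).

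It then remains to use the standard fact that the convex hull of a compact set in \(\mathbb R^m\) is compact. I would prove this with Carathéodory's theorem. Let \(m=d^2\) be the real dimension of the Hermitian operators, where \(d=\dim\cH_N\). Every element of \(\conv(G_\down)\) is a convex combination of at most \(m+1\) points of \(G_\down\). Hence \(\conv(G_\down)\) is the image of the compact set \(\Delta_{m}\times G_\down^{m+1}\), with \(\Delta_m\) the probability simplex, under the continuous map
\begin{align}
(p,\Pi_0,\ldots,\Pi_m)\mapsto\sum_{j=0}^m p_j\Pi_j .
\end{align}
A continuous image of a compact set is compact, so \(\mathcal F_\down=\conv(G_\down)\) is compact.

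The only point requiring care is the step already handled by Eq.~\eqref{eq:SM-finite-union-product-sets}. The exact sectors \(\mathscr P_\Gamma\) are individually not closed, so one cannot argue sector by sector. The down-set property is exactly what makes the union of exact sectors equal to a finite union of the closed, compact compatible sectors \(\mathscr S_\Lambda\), since limits that factorize further remain inside \(\down\). Once this identification is in place, the remainder is the routine Carathéodory argument above.
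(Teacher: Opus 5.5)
Your proposal is correct and follows the paper's proof. Convexity comes from the definition, and the pure-state generating set is compact by Eq.~\eqref{eq:SM-finite-union-product-sets} together with Lemma~\ref{lma:SM-compatible-sector-compact}. Compactness of its convex hull in finite dimensions then follows; you spell this last standard step out via Carath\'eodory, whereas the paper only asserts it.
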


\begin{proof}
Convexity follows directly from the definition. By
Eq.~\eqref{eq:SM-finite-union-product-sets}, the pure-state union
defining \(\mathcal F_\down\) is compact. Its convex hull is compact in
the finite-dimensional state space.
\end{proof}

In particular, \(\mathcal F_\down\) is closed. This justifies the
implication from vanishing architectural weight to membership in
\(\mathcal F_\down\) used in the main text.

The familiar scalar depth classes arise from our general framework of down-sets as the special case in which the allowed architectures are selected by a threshold on a monotone \(f\):
\begin{align}
\down_{f,s}
=
\{\Lambda\in\cY_N:f(\Lambda)\leq s\},
\qquad s\in f(\cY_N).
\label{eq:SM-f-threshold-downset}
\end{align}
Monotonicity of \(f\) ensures that \(\down_{f,s}\) is a down-set.
Membership in the corresponding class is equivalent to the existence
of a decomposition containing no component above the threshold:
\begin{align}
\rho\in\mathcal F_{\down_{f,s}}
\quad\Longleftrightarrow\quad
\rho
\text{ admits a decomposition with }
f(\Lambda_{\psi_\alpha})\leq s
\text{ for all }\alpha.
\label{eq:SM-threshold-membership}
\end{align}
Consequently,
\begin{align}
D_f(\rho)
=
\min\left\{
s\in f(\cY_N):
\rho\in\mathcal F_{\down_{f,s}}
\right\}.
\label{eq:SM-depth-membership}
\end{align}
The minimum exists because \(f(\cY_N)\) is finite. Thus, a
depth-type quantifier compresses a nested family of architectural
membership questions into a single number.

\subsection{Architectural tail weights}

Depth distinguishes whether all components outside a chosen class can be
eliminated, but it does not quantify how much of them is unavoidable.
This motivates the architectural tail weight introduced for a nonempty down-set \(\down\) as
\begin{align}
w_\down(\rho)
=
\inf_{q\in\cQ(\rho)}
\ell_\down(q),
\label{eq:SM-tail-weight}
\end{align}
where
\begin{align}
\ell_\down(q)
=
\sum_{\Lambda\notin\down}q_\Lambda.
\label{eq:SM-profile-leakage}
\end{align}
It is the smallest total formation weight that must be assigned outside
the architectural class \(\down\).

For a scalar-generated down-set \(\down_{f,s}\), this becomes the
minimum fraction of every formation lying above the scalar threshold
\(s\). In particular,
\begin{align}
w_{\down_{f,s}}(\rho)=0
\quad\Longleftrightarrow\quad
D_f(\rho)\leq s.
\label{eq:SM-tail-depth-relation}
\end{align}
Hence \(w_{\down_{f,s}}\) refines the corresponding depth criterion from
a binary membership statement to a quantitative one.

The tail weight can also be viewed as a convex-roof construction. Define
the indicator cost
\begin{align}
\chi_\down(\Lambda)
=
\begin{cases}
0, & \Lambda\in\down,\\
1, & \Lambda\notin\down.
\end{cases}
\label{eq:SM-downset-indicator}
\end{align}
Then
\begin{align}
w_\down(\rho)
=
\inf_{
\rho=\sum_\alpha p_\alpha\proj{\psi_\alpha}}
\sum_\alpha
p_\alpha
\chi_\down(\Lambda_{\psi_\alpha})=\inf_{q\in\cP(\rho)}
\sum_{\Lambda\in\cY_N}
q_\Lambda \chi_\down(\Lambda).
\label{eq:SM-tail-indicator-roof}
\end{align}
Indeed, the average indicator cost of a decomposition is exactly the
total weight assigned to pure components whose exact diagrams lie
outside \(\down\). Thus the architectural tail weight is the
formation-type convex roof generated by the binary cost
\(\chi_\down\).

Conversely, general scalar formation costs can be reconstructed from a nested
family of such indicator costs. Let \(t_1<t_2<\cdots<t_L\) be the distinct values taken by \(f\) on \(\cY_N\). For any profile
\(q\), we obtain by applying the $\inf$ on both sides of Eq.~\eqref{eq:SM-layer-cake} that
\begin{align}
F_f(\rho)
=
t_1+
\inf_{q\in\cQ(\rho)}
\sum_{k=1}^{L-1}
(t_{k+1}-t_k)
\ell_{\down_{f,t_k}}(q).
\label{eq:SM-formation-from-tail-leakages}
\end{align}
Thus a scalar formation measure is obtained by optimizing a weighted sum
of leakages across the nested threshold down-sets generated by \(f\).

In general,
\begin{align}
F_f(\rho)
\geq
t_1+
\sum_{k=1}^{L-1}
(t_{k+1}-t_k)
w_{\down_{f,t_k}}(\rho),
\label{eq:SM-formation-tail-lower-bound}
\end{align}
because the infimum of a sum is not smaller than the sum of the separate
infima. Equality need not hold: the profiles minimizing the different
tail weights may be incompatible. This is precisely the information
retained by the joint profile geometry and lost by optimizing each
threshold independently.

\subsection{Resource-weight interpretation}

We now show that the tail weight is precisely the resource
weight associated with the convex class \(\mathcal F_\down\), which was defined in Ref.~\cite{DucuaraPRL2020} and generalizes the
best separable approximation~\cite{LewensteinPRL1998} as
\begin{align}
W_{\mathcal F_\down}(\rho)
=
\inf_{\substack{
p\in[0,1],\\ \sigma_\down\in\mathcal F_\down,\\
\tau\in\mathcal S(\cH_N)
}}
\left\{
p:
\rho=(1-p)\sigma_\down+p\tau
\right\}.
\label{eq:SM-standard-resource-weight}
\end{align}

\begin{proposition}[Resource-weight equivalence]
For every nonempty down-set \(\down\),
\[
w_\down(\rho)=W_{\mathcal F_\down}(\rho).
\]
\end{proposition}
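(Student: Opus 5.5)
I prove the two inequalities $W_{\mathcal F_\down}(\rho)\le w_\down(\rho)$ and $w_\down(\rho)\le W_{\mathcal F_\down}(\rho)$ separately. Each one comes from converting a feasible object on one side into a feasible object on the other side with the same value. No compactness argument is needed, because both quantities are infima and we only transport feasible points.

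\emph{Step 1: $W_{\mathcal F_\down}\le w_\down$.} Take any $q\in\cQ(\rho)$ with a decomposition $\rho=\sum_\Lambda q_\Lambda\rho_\Lambda$, $\rho_\Lambda\in\mathfrak E_\Lambda$. Set $p=\ell_\down(q)=\sum_{\Lambda\notin\down}q_\Lambda$. If $0<p<1$, define
\[
\sigma_\down=\frac{1}{1-p}\sum_{\Lambda\in\down}q_\Lambda\rho_\Lambda,
\qquad
\tau=\frac{1}{p}\sum_{\Lambda\notin\down}q_\Lambda\rho_\Lambda .
\]
Since $\mathfrak E_\Lambda\subseteq\mathcal F_\down$ for $\Lambda\in\down$ and $\mathcal F_\down$ is convex, $\sigma_\down\in\mathcal F_\down$. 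Also $\tau$ is a state, and $\rho=(1-p)\sigma_\down+p\tau$. The edge cases are handled as follows.
\begin{itemize}
\item If $p=0$, choose $\sigma_\down=\rho$ and $\tau$ arbitrary.
\item If $p=1$, choose $\tau=\rho$ and $\sigma_\down$ any element of $\mathcal F_\down$. This set is nonempty because $\down$ is nonempty and every $\mathscr P_\Lambda$ is nonempty.
\end{itemize}
In every case $p$ is admissible in Eq.~\eqref{eq:SM-standard-resource-weight}. Taking the infimum over $q$ gives the inequality.

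\emph{Step 2: $w_\down\le W_{\mathcal F_\down}$.} Take an admissible triple with $\rho=(1-p)\sigma_\down+p\tau$. Since $\sigma_\down$ lies in the convex hull of $\bigcup_{\Lambda\in\down}\mathscr P_\Lambda$, it has a finite pure-state decomposition whose exact diagrams all lie in $\down$. Decompose $\tau$ into pure states arbitrarily, for instance by its spectral decomposition. Each of those components has some exact diagram by Lemma~\ref{lma:1}, possibly inside $\down$.

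Combining the two decompositions gives a pure-state decomposition of $\rho$. By the profile--decomposition correspondence of the previous section, it induces a profile $q\in\cQ(\rho)$. In this profile, the weight outside $\down$ comes only from components of $\tau$, so $\ell_\down(q)\le p$. Therefore $w_\down(\rho)\le p$, and taking the infimum over admissible triples gives the inequality.

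The only point needing care is that $\tau$ is completely unrestricted, so its components may land inside $\down$. This yields only the inequality $\ell_\down(q)\le p$ rather than equality, which is all Step 2 requires. The degenerate values $p\in\{0,1\}$ in Step 1 also need the separate treatment given there. I expect no real obstacle beyond this bookkeeping.

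Finally, the stated consequence $w_\down(\rho)=0\iff\rho\in\mathcal F_\down$ follows from the equivalence together with closedness of $\mathcal F_\down$ (Corollary~\ref{cor:SM-free-class-compact}). If $\rho\in\mathcal F_\down$, taking $p=0$ and $\sigma_\down=\rho$ shows the weight is zero. Conversely, suppose the weight is zero. Then there are admissible $p_n\to0$, and
\[
\sigma_n=\frac{\rho-p_n\tau_n}{1-p_n}\longrightarrow\rho .
\]
Since each $\sigma_n\in\mathcal F_\down$ and $\mathcal F_\down$ is closed, $\rho\in\mathcal F_\down$.
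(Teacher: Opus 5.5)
Your proof is correct and is essentially the paper's argument: you split a feasible profile into its in-$\down$ and out-of-$\down$ parts to build $\sigma_\down$ and $\tau$ with $p=\ell_\down(q)$, and for the reverse inequality you concatenate pure-state decompositions of $\sigma_\down$ and $\tau$ to get $\ell_\down(q)\le p$. The differences are cosmetic. You treat $p\in\{0,1\}$ explicitly, and there nonemptiness of $\mathcal F_\down$ is better justified by noting that every nonempty down-set contains the bottom diagram $(1,\ldots,1)$ than by claiming every $\mathscr P_\Lambda$ is nonempty, which fails if some local space is one-dimensional. You also derive faithfulness from a sequence argument plus closedness of $\mathcal F_\down$, where the paper uses attainment of the infimum by compactness.
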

\begin{proof}
Take any feasible profile \(q\in\cQ(\rho)\) and set
\(p=\ell_\down(q)\). For \(0<p<1\), define
\begin{align}
\sigma_\down
=
\frac{1}{1-p}
\sum_{\Lambda\in\down}
q_\Lambda\rho_\Lambda,
\label{eq:SM-inside-state}
\end{align}
and
\begin{align}
\tau
=
\frac{1}{p}
\sum_{\Lambda\notin\down}
q_\Lambda\rho_\Lambda.
\label{eq:SM-outside-state}
\end{align}
Then
\(\sigma_\down\in\mathcal F_\down\) and
\(\rho=(1-p)\sigma_\down+p\tau\). The cases \(p=0\) and \(p=1\) follow by omitting the absent term. Thus,
for every \(q\in\cQ(\rho)\), the above construction yields an
admissible resource decomposition with
\(p=\ell_\down(q)\), and, therefore \(W_{\mathcal F_\down}(\rho)
\leq
\ell_\down(q)\). Since this construction is valid for every
\(q\in\cQ(\rho)\), it follows that
\begin{align}
W_{\mathcal F_\down}(\rho)
\leq
w_\down(\rho).
\label{eq:SM-resource-weight-first-direction}
\end{align}

Conversely, consider any decomposition \(\rho=(1-p)\sigma_\down+p\tau\),
with \(\sigma_\down\in\mathcal F_\down\). Choose pure-state
decompositions of \(\sigma_\down\) and \(\tau\). All weight arising from
\((1-p)\sigma_\down\) lies inside \(\down\), whereas at most the weight
\(p\) arising from \(\tau\) can lie outside it. The induced exact profile therefore satisfies
\(\ell_\down(q)\leq p\). By the definition of \(w_\down\), this implies \(
w_\down(\rho)
\leq
p
\) for every admissible decomposition
\(\rho=(1-p)\sigma_\down+p\tau\), with
\(\sigma_\down\in\mathcal F_\down\). Taking the infimum over all such
resource decompositions therefore gives
\begin{align}
w_\down(\rho)
\leq
W_{\mathcal F_\down}(\rho).
\label{eq:SM-resource-weight-second-direction}
\end{align}
Combining this with
Eq.~\eqref{eq:SM-resource-weight-first-direction}, we obtain
\begin{align}
w_\down(\rho)
=
W_{\mathcal F_\down}(\rho).
\label{eq:SM-resource-weight-equivalence}
\end{align}

Because \(\mathcal F_\down\) and the full state space are compact, the
infimum is attained. Therefore \(w_\down(\rho)\) is the smallest
\(p\in[0,1]\) such that \(\rho=(1-p)\sigma_\down+p\tau\), with \(\sigma_\down\in\mathcal F_\down\) and
\(\tau\in\mathcal S(\cH_N)\).
\end{proof}

\subsection{General properties of architectural weights}

The equivalent formulations above make the main properties of
\(w_\down\) immediate. First, the convex-roof representation
\eqref{eq:SM-tail-indicator-roof} shows that \(w_\down\) is convex in
\(\rho\) and reduces on pure states to the indicator function
\eqref{eq:SM-downset-indicator}.

The weight is faithful:
\begin{align}
w_\down(\rho)=0
\quad\Longleftrightarrow\quad
\rho\in\mathcal F_\down.
\label{eq:SM-weight-faithful}
\end{align}
The admissible set in the resource-weight formulation
\eqref{eq:SM-standard-resource-weight} is compact, since
\([0,1]\), \(\mathcal F_\down\), and the full state space are compact
and the decomposition constraint is closed. Hence the infimum is
attained. Therefore \(w_\down(\rho)=0\) is equivalent to the existence
of an admissible decomposition with \(p=0\), that is,
\(\rho\in\mathcal F_\down\).

As a resource weight for the convex free set
\(\mathcal F_\down\), \(w_\down\) is nonincreasing under quantum
channels \(\Phi\) satisfying
\(\Phi(\mathcal F_\down)\subseteq\mathcal F_\down\)
\cite{DucuaraPRL2020,ChitambarRMP2019}.
Moreover, if \(\down_1\subseteq\down_2\), then
\(\ell_{\down_1}(q)\geq\ell_{\down_2}(q)\) for every feasible profile
\(q\), and therefore \(
w_{\down_1}(\rho)
\geq
w_{\down_2}(\rho)\).

For scalar-generated classes, Eq.~\eqref{eq:SM-weight-faithful}
together with Eq.~\eqref{eq:SM-depth-membership} gives
\begin{align}
w_{\down_{f,s}}(\rho)=0
\quad\Longleftrightarrow\quad
D_f(\rho)\leq s.
  \label{eq:SM-tail-depth-faithful}
 \end{align}
Thus the tail weight quantitatively refines the corresponding depth
membership criterion.

\section{A four-qubit state with a continuous frontier}
Here we show that the irreducible architecture of the state $\rho_{\rm P}$ is not a single profile, but a continuous Pareto frontier. Let us first recall some definitions from the main text:
\begin{align}
\ket{A}=\ket{\Phi^+}_{12}\otimes\ket{\Phi^+}_{34},
\qquad
\ket{B}=\ket{\mathrm{GHZ}^+}_{123}\otimes\ket{-}_4.
\label{eq:A-B-def}
\end{align}
where
\begin{align}
&\ket{\Phi^+}=\frac{\ket{00}+\ket{11}}{\sqrt2},
\qquad
\ket{\mathrm{GHZ}^+}=\frac{\ket{000}+\ket{111}}{\sqrt2},
\nonumber\\
&\ket{-}=\frac{\ket0-\ket1}{\sqrt2}.
\end{align}
Then \(\ket{A}\) has exact type \((2,2)\), \(\ket{B}\) has exact type \((3,1)\), and \(\langle A|B\rangle=0\). Define
\begin{align}
\rho_{\rm P}
=
\frac{1}{2}\proj{A}
+
\frac{1}{2}\proj{B}
+
\frac{1}{4}\left(\ket{A}\!\bra{B}+\ket{B}\!\bra{A}\right).
  \label{eq:SM-rhoP}
 \end{align}
In the basis \(\{\ket{A},\ket{B}\}\),
\begin{align}
\rho_{\rm P}
=
\begin{pmatrix}
1/2&1/4\\
1/4&1/2
\end{pmatrix},
\end{align}
so \(\rho_{\rm P}\) is a rank-two mixed state with eigenvalues \(3/4\) and \(1/4\).

\begin{lemma}[Exact architecture sectors of $\rho_{\rm P}$]\label{lma:profilesrhoP} The two-dimensional support of \(\rho_{\rm P}\) contains only three exact Young-diagram types. Every nonzero vector in
\begin{align}
S=\operatorname{span}\{\ket{A},\ket{B}\}
\end{align}
has the form
\begin{align}
\ket{\psi(a,b)}=a\ket{A}+b\ket{B},
\end{align}
and
\begin{align}
\Lambda_{\psi(a,b)}
=
\begin{cases}
(2,2), & b=0,\\
(3,1), & a=0,\\
(4), & ab\ne0.
\end{cases}
\label{eq:support-classification-main}
\end{align}
\end{lemma}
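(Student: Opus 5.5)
The cases \(b=0\) and \(a=0\) hold by construction: \(\ket{A}\) factorizes as \(\{12\}\{34\}\) with two Bell factors, and \(\ket{B}\) as \(\{123\}\{4\}\) with a GHZ factor. Neither admits a finer factorization, since \(\ket{\Phi^+}\) and \(\ket{\mathrm{GHZ}^+}\) are entangled across every internal cut. By Lemma~\ref{lma:1} the exact diagrams are therefore \((2,2)\) and \((3,1)\). So the real content is the case \(ab\neq0\). There I will show that \(\ket{\psi(a,b)}\) is not a product state across any bipartition \(S|\bar S\) of the four qubits. If some nontrivial factorization partition existed, grouping its blocks into two nonempty sets would give a bipartite cut across which the state is a product. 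Hence entanglement across all cuts forces the finest factorization partition to be the trivial one, and the exact diagram is \((4)\).

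To test the cuts, I will expand both vectors in the computational basis:
\begin{align}
2\ket{\psi(a,b)}=(a+b)\ket{0000}-b\ket{0001}+a\ket{0011}+a\ket{1100}+b\ket{1110}+(a-b)\ket{1111}.
\end{align}
For a cut \(S|\bar S\), I reshape the coefficients into a matrix \(M_S\) with rows indexed by the bits on \(S\) and columns by the bits on \(\bar S\). The state is a product across the cut iff \(M_S\) has rank one. It therefore suffices to exhibit, for each cut, a nonzero \(2\times2\) minor. There are seven cuts up to complement: the four \(1|3\) cuts \(\{1\},\{2\},\{3\},\{4\}\) and the three \(2|2\) cuts \(\{12\},\{13\},\{14\}\).

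For each cut I will pick two of the six basis strings that differ on both sides of the cut and whose "crossed" strings do not appear in the expansion. The minor is then a product of two coefficients from \(\{a,b\}\), which is nonzero when \(ab\ne0\). For instance, for the cut \(\{4\}|\{123\}\), the strings \(1100\) and \(0011\) give the diagonal minor \(\bigl(\begin{smallmatrix}a&0\\0&a\end{smallmatrix}\bigr)\) with determinant \(a^2\). For \(\{12\}|\{34\}\), the strings \(0001\) and \(1110\) give a minor of the form \(-b^2\), plus a check that the crossed entries at \(0010\) and \(1101\) vanish. Analogous pairs, for example \(\{0011,1110\}\) or \(\{0001,1100\}\), should handle the cuts \(\{1\},\{2\},\{3\},\{13\},\{14\}\). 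In these the off-diagonal entries are absent strings, so the minor reduces to \(\pm ab\), \(a^2\) or \(b^2\).

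The only real obstacle is bookkeeping: for each of the seven cuts I must confirm that the chosen pair's crossed strings are genuinely absent from the expansion. If a crossed string is present, the minor picks up a term like \((a\pm b)\) and could vanish on a line such as \(a=\pm b\). In that case I would switch to a different pair of strings for that cut. Alternatively I would use two independent minors whose simultaneous vanishing forces \(ab=0\). With all seven cuts certified, the three-case classification follows.
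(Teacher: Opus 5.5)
Your proof is correct and is essentially the paper's own argument. Both use the same computational-basis expansion and reshape it into a coefficient matrix for each of the seven bipartitions. Both certify Schmidt rank at least two with nonzero \(2\times2\) minors, including the same minors \(-b^2\) for \(12|34\) and \(a^2\) for \(123|4\), together with the remark that any nontrivial factorization would yield a product bipartition.

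One correction to your bookkeeping concerns the cut \(3|124\). There the coefficient matrix (columns \(000,001,110,111\) on qubits \(1,2,4\)) has rows \((a+b,-b,a,0)\) and \((0,a,b,a-b)\). So no pair of strings gives a minor \(\pm ab\), \(a^2\) or \(b^2\). Every \(2\times2\) minor carries a factor \(a\pm b\) or \(a^2+b^2\), and so vanishes somewhere with \(ab\neq0\). For this cut you genuinely need your fallback of two minors, e.g.\ \(a(a+b)\) and \(a(a-b)\), whose joint vanishing forces \(a=0\). This is exactly what the paper's table does.
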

Since every pure state in any decomposition of \(\rho_{\rm P}\) must lie in \(S\), Eq.~\eqref{eq:support-classification-main} completely determines the possible exact sectors: only \((2,2)\), \((3,1)\), and \((4)\) can appear.

\begin{proof}
The two vectors are
\begin{align}
2\ket A&=\ket{0000}+\ket{0011}+\ket{1100}+\ket{1111},\\
2\ket B&=\ket{0000}-\ket{0001}+\ket{1110}-\ket{1111}.
\end{align}
Thus a general linear combination of both is
\begin{align}
\ket{\psi(a,b)}&=a\ket A+b\ket B\notag\\
&=
\frac{1}{2}\Big[
(a+b)\ket{0000}
-b\ket{0001}
+a\ket{0011}
+a\ket{1100}
+b\ket{1110}
+(a-b)\ket{1111}
\Big].
\end{align}

The exact type of a pure four-qubit state is fixed by the bipartitions
across which it factorizes. For a given cut \(X|\bar X\), we write
\(\ket{\psi(a,b)}\) as a bipartite state,
\begin{align}
\ket{\psi(a,b)}
=
\sum_{\mu,\nu} M^{X|\bar X}_{\mu\nu}\ket{\mu}_X\otimes\ket{\nu}_{\bar X}.
\end{align}
The Schmidt rank across \(X|\bar X\) is the rank of the coefficient
matrix \(M^{X|\bar X}\). Hence the state is a product across this cut
if and only if \(M^{X|\bar X}\) has rank one. We use the elementary
rank criterion that a matrix has rank one only if all its \(2\times2\)
minors vanish; conversely, a single nonzero \(2\times2\) minor proves
Schmidt rank at least two.

We first consider the cut \(12|34\). In the product bases
\(\{\ket{00},\ket{01},\ket{10},\ket{11}\}_{12}\) and
\(\{\ket{00},\ket{01},\ket{10},\ket{11}\}_{34}\), the coefficient
matrix is
\begin{align}
M_{12|34}(a,b)
=
\frac{1}{2}
\begin{pmatrix}
 a+b & -b & 0 & a\\
 0&0&0&0\\
 0&0&0&0\\
 a&0&b&a-b
\end{pmatrix}.
\end{align}
The minor formed from rows \(\ket{00},\ket{11}\) and columns
\(\ket{01},\ket{10}\) has determinant \(-b^2/4\). Therefore Schmidt
rank one across \(12|34\) requires \(b=0\). Conversely, if \(b=0\), the
vector is proportional to
\(\ket A=\ket{\Phi^+}_{12}\otimes\ket{\Phi^+}_{34}\), which is a
product across \(12|34\) but not within the pairs. This gives exact
type \((2,2)\).

For the cut \(123|4\), in the product bases
\(\{\ket{000},\ket{001},\ket{010},\ket{011},
\ket{100},\ket{101},\ket{110},\ket{111}\}_{123}\) and
\(\{\ket0,\ket1\}_4\), the coefficient matrix is
\begin{align}
M_{123|4}(a,b)
=
\frac{1}{2}
\begin{pmatrix}
a+b & -b\\
0 & a\\
0 & 0\\
0 & 0\\
0 & 0\\
0 & 0\\
a & 0\\
b & a-b
\end{pmatrix}.
\end{align}
The minor formed from rows \(\ket{001},\ket{110}\) and columns
\(\ket0,\ket1\) has determinant \(-a^2/4\). Therefore Schmidt rank one
across \(123|4\) requires \(a=0\). Conversely, if \(a=0\), the vector is
proportional to
\(\ket B=\ket{\mathrm{GHZ}^+}_{123}\otimes\ket{-}_4\). Since
\(\ket{\mathrm{GHZ}^+}_{123}\) does not factorize across any bipartition
of qubits \(1,2,3\), this gives exact type \((3,1)\).

The remaining bipartitions give no further nonzero product rays. The
following table lists representative \(2\times2\) minors of the
corresponding coefficient matrices. The displayed minors are sufficient
to obtain the condition in the last column.
\begin{center}
\begin{tabular}{c|c|c}
cut & sufficient \(2\times2\) minors & rank-one condition \\
\hline
\(12|34\) & \(-b^2/4\) & \(b=0\) \\
\(123|4\) & \(-a^2/4\) & \(a=0\) \\
\(1|234\), \(2|134\) & \(a^2/4,\ -b^2/4\) & \(a=b=0\) \\
\(3|124\) &
\(a(a+b)/4,\ a(a-b)/4,\ b(a+b)/4,\ b(a-b)/4\)
& \(a=b=0\) \\
\(13|24\), \(14|23\) & \(a^2/4,\ -b^2/4\) & \(a=b=0\)
\end{tabular}
\end{center}
Each entry is understood up to the sign shown for one convenient choice
of rows and columns. Thus the only nonzero vectors in the support that
factorize across any bipartition are the rays \(b=0\) and \(a=0\). Every
vector with \(ab\ne0\) has Schmidt rank larger than one across every
nontrivial bipartition, and hence has no nontrivial tensor-product
factorization. Its exact type is therefore \((4)\). We have proved
\begin{align}
\Lambda_{\psi(a,b)}
=
\begin{cases}
(2,2),& b=0,\\
(3,1),& a=0,\\
(4),& ab\ne0.
\end{cases}
\end{align}
\end{proof}

Using Lemma~\ref{lma:profilesrhoP}, we write a feasible profile of $\rho_{\rm P}$ as
\begin{align}
q=(x,y,z)
=
\bigl(q_{(2,2)},q_{(3,1)},q_{(4)}\bigr),
\qquad
z=1-x-y.
\end{align}
Because the only exact-\((2,2)\) ray in the support is \(\ket{A}\), and the only exact-\((3,1)\) ray is \(\ket{B}\), assigning weights \(x\) and \(y\) to these two sectors leaves the residual operator
\begin{align}
R(x,y)
:=
\rho_{\rm P}-x\proj{A}-y\proj{B}.
\end{align}
The profile \((x,y,z)\) can only be feasible when \(R(x,y)\ge0\). In the basis \(\{\ket{A},\ket{B}\}\),
\begin{align}
R(x,y)
=
\begin{pmatrix}
1/2-x&1/4\\
1/4&1/2-y
\end{pmatrix}.
\end{align}
Therefore
\begin{align}
(x,y,z)\in\cQ(\rho_{\rm P})
\quad\Longleftrightarrow\quad
\left(\frac{1}{2}-x\right)
\left(\frac{1}{2}-y\right)
\ge
\frac{1}{16},
\label{eq:feasible-region}
\end{align}
with \(x,y\ge0\) and \(z=1-x-y\ge0\). Positivity is also sufficient: whenever \(R(x,y)\ge0\), its nonzero eigenvectors have nonzero components along both \(\ket{A}\) and \(\ket{B}\), because the off-diagonal element is fixed and nonzero. By Eq.~\eqref{eq:support-classification-main}, these eigenvectors are exact-\((4)\) states. Hence the residual is generated by exact-\((4)\) states.

In this three-sector support, the profile order has a simple interpretation. A profile is architecturally weaker if it has no more exact-\((4)\) weight and, correspondingly, no less weight in the lower sectors \((2,2)\) and \((3,1)\). Thus, for profiles supported on \((2,2)\), \((3,1)\), and \((4)\),
\begin{align}
r\preceq q
\quad\Longleftrightarrow\quad
x_r\ge x_q,
\qquad
y_r\ge y_q,
\qquad
z_r\le z_q.
\end{align}
These three conditions follow by applying the condition Eq.~(\ref{eq:profile-order}) to the three nontrivial upward-closed sets
\(\{(4)\}\), \(\{(2,2),(4)\}\), and \(\{(3,1),(4)\}\). Together they express that decreasing the weight of the strongest sector \((4)\), while replacing it by weight in the lower incomparable sectors, gives an architecturally weaker profile.

\begin{theorem}[Continuous Pareto frontier]
\label{thm:continuous-frontier}
For the four-qubit state \(\rho_{\rm P}\), the exact-architecture Pareto frontier is the continuous curve
\begin{align}
\cP(\rho_{\rm P})
=
\left\{
(x,y,1-x-y):
\left(\frac{1}{2}-x\right)
\left(\frac{1}{2}-y\right)
=
\frac{1}{16}
\right\},
  \label{eq:SM-frontier-curve}
 \end{align}
with \(0\le x,y\le3/8\). Hence the irreducible exact architecture of \(\rho_{\rm P}\) is not a single profile.
\end{theorem}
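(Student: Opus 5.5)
The proof takes the feasible region from Eq.~\eqref{eq:feasible-region} as given and determines its minimal elements under the three-condition form of the profile order, $r\preceq q \Leftrightarrow x_r\ge x_q,\ y_r\ge y_q,\ z_r\le z_q$. Since $z=1-x-y$, the third condition follows from the first two. So $r\prec q$ holds exactly when $(x_r,y_r)\ge(x_q,y_q)$ componentwise with $(x_r,y_r)\neq(x_q,y_q)$. The Pareto frontier is therefore the set of componentwise-maximal points of the region
\begin{align}
G=\{(x,y): x,y\ge0,\ x+y\le1,\ (\tfrac12-x)(\tfrac12-y)\ge\tfrac1{16},\ x,y\le\tfrac12\}.
\end{align}
The constraint $x,y\le\frac12$ comes from the diagonal entries of $R(x,y)\ge0$; in fact $x,y<\frac12$, since the determinant condition fails otherwise.

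I first describe $G$ concretely. Setting $u=\frac12-x$ and $v=\frac12-y$, positivity becomes $u,v>0$ with $uv\ge\frac1{16}$. The constraints $x,y\ge0$ become $u,v\le\frac12$. The condition $z\ge0$ becomes $u+v\ge0$, which holds automatically. Note that $uv\ge\frac1{16}$ with $v\le\frac12$ forces $u\ge\frac18$, i.e.\ $x\le\frac38$, and symmetrically $y\le\frac38$.

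Next I show that every frontier point lies on the curve. Suppose $(x,y)\in G$ with $(\frac12-x)(\frac12-y)>\frac1{16}$. Increasing $x$ slightly keeps the strict inequality by continuity, and $x<\frac38\le\frac12$ leaves room to do so. It also preserves $x+y\le1$, since $x+y\le\frac34$ on $G$. The result is a feasible $r\prec q$, so $q$ is not on the frontier.

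Then I show the converse: every point on the curve with $0\le x,y\le\frac38$ is feasible and minimal. It is feasible by Eq.~\eqref{eq:feasible-region}. If $(x',y')\ge(x,y)$ componentwise with at least one strict inequality, then $u'\le u$ and $v'\le v$ with one inequality strict, and all of these are positive. Hence $u'v'<uv=\frac1{16}$, so $(x',y')$ is infeasible. Thus no feasible $r\prec q$ exists, and $q\in\cP(\rho_{\rm P})$.

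Finally, the curve $v=1/(16u)$ with $u\in[\frac18,\frac12]$ is a continuum. It contains the distinct endpoints $(\frac38,0)$ and $(0,\frac38)$, so $\cP(\rho_{\rm P})$ is not a single profile. As a check, at $x=y=\frac14$ we get $z=\frac12$.

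The main obstacle is making sure the feasibility characterization is exactly right. This requires that every decomposition assigns the $(2,2)$ and $(3,1)$ weights only to the rays $\ket A$ and $\ket B$, which Lemma~\ref{lma:profilesrhoP} guarantees. It also requires that a positive residual is realizable by exact-$(4)$ states, including the boundary case where $R$ has rank one: its single eigenvector still has both components nonzero because the off-diagonal entry equals $\frac14$. With those two facts in hand, everything else is elementary monotonicity of $uv$.
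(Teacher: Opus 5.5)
Your proposal is correct and follows essentially the same route as the paper's proof: you reduce the profile order to componentwise comparison of $(x,y)$, show that profiles with $(\frac12-x)(\frac12-y)>\frac1{16}$ are dominated by a small increase of $x$, and show that points on the curve are undominated because the product strictly decreases under any componentwise increase. Your extra bookkeeping ($x,y\le\frac38$, $z\ge0$ holding automatically, and the rank-one boundary case of $R(x,y)$) makes the paper's terser argument more careful; in the minimality step, just phrase it as ``if $r$ were feasible then $u',v'>0$'' rather than asserting positivity outright.
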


\begin{proof}
    Every pure state appearing in any decomposition of \(\rho_{\rm P}\) must lie in the support \(S\). Indeed, if \(\rho=\sum_jp_j\proj{\psi_j}\) and \(\ket\eta\in\ker\rho\), then
\begin{align}
0=\bra\eta\rho\ket\eta=\sum_jp_j|\langle\eta|\psi_j\rangle|^2,
\end{align}
so each \(\ket{\psi_j}\) is orthogonal to \(\ker\rho\), i.e. lies in \(\supp\rho=S\).

A profile with \(x=q_{(2,2)}\), \(y=q_{(3,1)}\), and
\(z=q_{(4)}=1-x-y\) is feasible if and only if
\begin{align}\label{eq:feasibilitysupp}
\left(\frac{1}{2}-x\right)\left(\frac{1}{2}-y\right)\ge\frac{1}{16},
\qquad
x\ge0,\qquad y\ge0,
\qquad z=1-x-y\ge0.
\end{align}
Moreover, we have seen that \((x',y',z')\preceq (x,y,z)\) if
and only if \(z'\le z\), \(x'\ge x\), and \(y'\ge y\).
Interior feasible profiles are dominated, because one can increase
\(x\) or \(y\) slightly while remaining feasible. Boundary profiles are
not dominated, because any feasible \(q'\ne q\) with
\(q'\preceq q\) would require \(x'\ge x\) and \(y'\ge y\),
with at least one inequality strict, which leads to
\begin{align}
    \left(\frac{1}{2}-x'\right)\left(\frac{1}{2}-y'\right)
<
\left(\frac{1}{2}-x\right)\left(\frac{1}{2}-y\right)
=
\frac{1}{16},
\end{align}
violating the feasibility
condition~(\ref{eq:feasibilitysupp}).
Therefore the Pareto frontier is
exactly
\begin{align}
\left(\frac{1}{2}-x\right)\left(\frac{1}{2}-y\right)=\frac{1}{16},
\qquad
0\le x,y\le\frac{3}{8},
\qquad
z=1-x-y.
\end{align}
\end{proof}

The endpoints of the frontier are
\begin{align}
q_A=\left(\frac{3}{8},0,\frac{5}{8}\right),
\qquad
q_B=\left(0,\frac{3}{8},\frac{5}{8}\right).
\end{align}
They are realized by the decompositions
\begin{align}
\rho_{\rm P}
=
\frac{3}{8}\proj{A}
+
\frac{5}{8}\proj{\psi_A},
\qquad
\ket{\psi_A}
=
\frac{\ket{A}+2\ket{B}}{\sqrt5},
\end{align}
and
\begin{align}
\rho_{\rm P}
=
\frac{3}{8}\proj{B}
+
\frac{5}{8}\proj{\psi_B},
\qquad
\ket{\psi_B}
=
\frac{2\ket{A}+\ket{B}}{\sqrt5}.
\end{align}
The states \(\ket{\psi_A}\) and \(\ket{\psi_B}\) are exact \((4)\) by Eq.~\eqref{eq:support-classification-main}. The two endpoints are incomparable because \((2,2)\) and \((3,1)\) are incomparable Young diagrams. Interior points of the frontier contain all three sectors \((2,2)\), \((3,1)\), and \((4)\), with continuously varying weights.

\begin{lemma}[Minimum GME weight]\label{lma:minGME} The Pareto-minimal profile contains an unavoidable exact-\((4)\) contribution of
\begin{align}
q_{(4)}=1-x-y\ge\frac{1}{2},
\end{align}
with equality at the symmetric point \(x=y=1/4\).
\end{lemma}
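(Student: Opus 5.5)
The plan is to reduce the statement to a one-dimensional optimization on the frontier curve of Theorem~\ref{thm:continuous-frontier}. By that theorem, every Pareto-minimal profile has the form \((x,y,1-x-y)\) with \(u:=\tfrac12-x\) and \(v:=\tfrac12-y\) satisfying \(uv=\tfrac1{16}\) and \(u,v\in[\tfrac18,\tfrac12]\). In these variables the exact-\((4)\) weight is
\begin{align}
q_{(4)}=1-x-y=u+v .
\end{align}
The claim is therefore equivalent to \(u+v\ge\tfrac12\) under the constraint \(uv=\tfrac1{16}\) with \(u,v>0\).

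This lower bound follows from the AM--GM inequality, \(u+v\ge2\sqrt{uv}=2\cdot\tfrac14=\tfrac12\). Equality holds if and only if \(u=v=\tfrac14\), that is, at \(x=y=\tfrac14\). This point lies on the frontier, since \(\tfrac14\in[0,\tfrac38]\), and it corresponds to \(q_{(4)}=\tfrac12\), so the bound is attained.

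It is also worth recording that the bound extends to all feasible profiles, not only frontier ones. This can be seen in two ways. First, by Theorem~\ref{thm:SM-Pareto-domination}, every \(q\in\cQ(\rho_{\rm P})\) dominates some \(r\in\cP(\rho_{\rm P})\). Since \(\{(4)\}\) is an up-set, the definition of the profile order gives \(q_{(4)}\ge r_{(4)}\ge\tfrac12\). Second, the feasibility condition~\eqref{eq:feasible-region} gives the bound directly: \(uv\ge\tfrac1{16}\) with \(u,v>0\) again implies \(u+v\ge\tfrac12\) by AM--GM. Here positivity of \(u\) and \(v\) follows from \(R(x,y)\ge0\) together with the nonzero off-diagonal entry.

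There is no real obstacle in this argument. The only step needing care is confirming that \(u\) and \(v\) are positive, so that AM--GM applies. On the frontier this holds because \(x,y\le\tfrac38<\tfrac12\). In the feasible region it holds because the diagonal entries of the positive semidefinite matrix \(R(x,y)\) must be nonnegative, and a zero diagonal entry would force the off-diagonal entry \(\tfrac14\) to vanish, which it does not.
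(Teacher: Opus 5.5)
Your proof is correct and is the paper's argument. Both substitute $u=\frac12-x$, $v=\frac12-y$, so that the frontier becomes $uv=\frac1{16}$ and $q_{(4)}=u+v$, then apply AM--GM with equality only at $u=v=\frac14$. Your extension to all feasible profiles is also consistent with the paper: via Pareto domination and the up-set $\{(4)\}$ it matches the main text's remark, and the direct route uses the feasibility inequality with positivity of $u,v$ taken from $R(x,y)\ge0$.
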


\begin{proof}
Along the frontier, set \(u=1/2-x\) and \(v=1/2-y\). Then the boundary
condition fixes the product, \(uv=1/16\), while the exact-\((4)\) weight
is \(q_{(4)}=1-x-y=u+v\). For two positive numbers with fixed product,
the sum is minimized when the two numbers are equal, since
\(u+v\ge2\sqrt{uv}\), with equality iff \(u=v\). Hence
\(q_{(4)}\ge2\sqrt{1/16}=1/2\), with equality at \(u=v=1/4\), i.e.
\(x=y=1/4\). This establishes that \(\rho_{\rm P}\) has an
unavoidable exact-\((4)\) formation weight of at least \(1/2\), and this
minimum is attained only by the symmetric Pareto-minimal profile.
\end{proof}

Thus the irreducible architecture of \(\rho_{\rm P}\) cannot be reduced to a unique profile. After all dominated profiles have been removed, one is still left with a continuous Pareto frontier of inequivalent minimal
formations, trading the two lower incomparable architectures \((2,2)\)
and \((3,1)\). Moreover, this trade-off never removes the genuine
four-party component: every irreducible formation requires exact-\((4)\) weight at least \(1/2\).

The four-qubit construction extends to the tunable-coherence family
\begin{align}
\rho_\alpha
=
\frac{1}{2}\proj A+\frac{1}{2}\proj B
+\alpha\ket A\!\bra B+\alpha^*\ket B\!\bra A,
\qquad
|\alpha|\leq\frac{1}{2}.
\label{eq:SM-tunable-coherence}
\end{align}
For \(0<|\alpha|<1/2\), its feasible profiles satisfy
\begin{align}
\left(\frac{1}{2}-x\right)
\left(\frac{1}{2}-y\right)
\geq|\alpha|^2,
\qquad x,y\geq0,
\qquad z=1-x-y\geq0,
\end{align}
and the Pareto frontier is the equality boundary, with
\(0\leq x,y\leq1/2-2|\alpha|^2\). Its endpoints are
\((1/2-2|\alpha|^2,0,1/2+2|\alpha|^2)\) and
\((0,1/2-2|\alpha|^2,1/2+2|\alpha|^2)\), while
\(z_{\min}=2|\alpha|\) is attained at
\(x=y=1/2-|\alpha|\). At \(\alpha=0\), the unique Pareto-minimal
profile is \((1/2,1/2,0)\); at \(|\alpha|=1/2\), the state is pure of
exact type \((4)\), with profile \((0,0,1)\).

\section{One-axis-twisting examples}
\label{sec:SM-OAT}

\subsection{Exact QFI for ideal one-axis twisting}

Figure~2 of the main text uses the exact QFI of the pure OAT state
\begin{align}
\ket{\psi_\tau}
=
U_\tau\ket{+x}^{\otimes N},
\qquad
U_\tau=e^{-i\tau J_z^2}.
\end{align}
For a collective generator \(J_{\bm n}=\bm n\cdot\bm J\), pure-state
QFI satisfies
\(F_Q[\proj{\psi},J_{\bm n}]=4(\Delta J_{\bm n})^2\)
\cite{PezzeRMP2018}. Optimization over \(\bm n\) therefore gives
\begin{align}
F_Q^{\rm opt}(\tau)
=
4\lambda_{\max}\!\left[\Gamma^{(1)}(\tau)\right],
\label{eq:SM-OAT-optimal-QFI}
\end{align}
where
\(\Gamma^{(1)}_{kl}
=\langle\{J_k,J_l\}\rangle/2
-\langle J_k\rangle\langle J_l\rangle\).

   For \(N\ge2\), the required nonzero moments are
\cite{KitagawaPRA1993,BaamaraPRL2021,BaamaraCRPhys2022}
\begin{align}
\langle J_x\rangle
&=
\frac{N}{2}\cos^{N-1}\tau,
\\
\langle J_x^2\rangle
&=
\frac{N(N+1)}{8}
+
\frac{N(N-1)}{8}\cos^{N-2}(2\tau),
\\
\langle J_y^2\rangle
&=
\frac{N(N+1)}{8}
-
\frac{N(N-1)}{8}\cos^{N-2}(2\tau),
\\
\langle J_z^2\rangle
&=
\frac{N}{4},
\\
\frac{1}{2}\langle\{J_y,J_z\}\rangle
&=
\frac{N(N-1)}{4}
\sin\tau\,\cos^{N-2}\tau,
\label{eq:SM-OAT-moments}
\end{align}
while the remaining first moments and symmetrized cross correlations
vanish. For each \(\tau\), Fig.~2 was obtained by inserting
\(F_Q^{\rm opt}(\tau)\) into
Eq.~\eqref{eq:QFI-downset-weight} of the main text. The lower panel
retains only whether the corresponding numerator is positive,
\(F_Q^{\rm opt}(\tau)>s_\down\).

\subsection{Nonlinear-squeezing bounds under collective dephasing}

We now include collective dephasing relevant to OAT realizations in
trapped ions and cold atoms, following Ref.~\cite{BaamaraCRPhys2022}.
Instead of the exact QFI, we use the experimentally accessible
nonlinear-squeezing lower bound~\cite{GessnerPRL2019}, implemented by
measurement after interaction (MAI). A rotation angle \(\theta\) is
encoded as
\(\rho_\theta=e^{-i\theta J_{\bm n}}\rho e^{i\theta J_{\bm n}}\), with
normalized \(\bm n\) restricted to the \(yz\)-plane. The MAI protocol
uses two OAT evolutions, one for state preparation and a second with
inverted interaction sign, before measuring an optimal collective spin observable in the $yz$-plane. This
corresponds to the effective observables
\(\bm X=(U_{-\tau}^\dagger J_yU_{-\tau},
U_{-\tau}^\dagger J_zU_{-\tau})^\top\).
From
\[C_{kl}=-i\langle[X_l,J_k]\rangle\quad\text{and}\quad
\Gamma_{kl}=\langle\{X_k,X_l\}\rangle/2
-\langle X_k\rangle\langle X_l\rangle,\]
evaluated at $\theta=0$, optimization over the rotation axis and measured linear combination
then gives the lower bound
\begin{equation}
F_Q^{\rm opt}(\rho)
\geq\lambda_{\max}\!\left(C\Gamma^{-1}C^\top\right).
\label{eq:SM-NL-QFI-bound}
\end{equation}
The results are shown in Fig.~\ref{fig:SM-OAT-dephasing}.

\subsubsection{Ballistic collective dephasing}

Ballistic dephasing replaces the preparation and measurement OAT
propagators by
\(U_{\rm p}(\delta)=e^{-i\tau(J_z^2+\delta J_z)}\) and
\(U_{\rm m}(\delta)=e^{-i\tau(-J_z^2+\delta J_z)}\), where \(\delta\)
is Gaussian with \(\langle\delta\rangle=0\) and fixed variance
\(\langle\delta^2\rangle\). The same \(\delta\) acts throughout both
stages: only the OAT term changes sign, while the dephasing phase
continues to accumulate and the
average is taken only after the complete protocol. The response matrix
is
\begin{align}
C_{kl}
={}&
\left.\partial_\theta
\int d\delta\,
\frac{e^{-\delta^2/[2\langle\delta^2\rangle]}}
{\sqrt{2\pi\langle\delta^2\rangle}}\,
\bra{+x}^{\otimes N}
U_{\rm p}^\dagger(\delta)e^{i\theta J_k}
U_{\rm m}^\dagger(\delta)J_lU_{\rm m}(\delta)
e^{-i\theta J_k}U_{\rm p}(\delta)
\ket{+x}^{\otimes N}
\right|_{\theta=0}
\nonumber\\
={}&
i\int d\delta\,
\frac{e^{-\delta^2/[2\langle\delta^2\rangle]}}
{\sqrt{2\pi\langle\delta^2\rangle}}\,
\bra{+x}^{\otimes N}
U_{\rm p}^\dagger(\delta)
\bigl[J_k,U_{\rm m}^\dagger(\delta)J_lU_{\rm m}(\delta)\bigr]
U_{\rm p}(\delta)
\ket{+x}^{\otimes N}.
\label{eq:SM-ballistic-response}
\end{align}
At \(\theta=0\), the first moments vanish and the OAT evolutions cancel,
so that
\begin{align}
\Gamma_{kl}
={}&
\frac12\int d\delta\,
\frac{e^{-\delta^2/[2\langle\delta^2\rangle]}}
{\sqrt{2\pi\langle\delta^2\rangle}}\,
\bra{+x}^{\otimes N}
e^{2i\delta\tau J_z}\{J_k,J_l\}
e^{-2i\delta\tau J_z}
\ket{+x}^{\otimes N}.
\label{eq:SM-ballistic-covariance}
\end{align}
This gives
\begin{equation}
\begin{aligned}
C_{\rm bal}
&=\frac N2
\begin{pmatrix}
\dfrac{N-1}{2}
\left(
e^{-9\langle\delta^2\rangle\tau^2/2}
+e^{-\langle\delta^2\rangle\tau^2/2}
\right)
\sin\tau\cos^{N-2}\tau
&
-e^{-\langle\delta^2\rangle\tau^2/2}\cos^{N-1}\tau
\\
e^{-2\langle\delta^2\rangle\tau^2} & 0
\end{pmatrix},
\\[1mm]
\Gamma_{\rm bal}
&=
\begin{pmatrix}
\dfrac{N(N+1)}8
-\dfrac{N(N-1)}8e^{-8\langle\delta^2\rangle\tau^2}
&0
\\
0&\dfrac N4
\end{pmatrix}.
\end{aligned}
\label{eq:SM-ballistic-MAI}
\end{equation}
These results coincide with Eq.~(57) in Ref.~\cite{BaamaraCRPhys2022},
but differ from the expressions given in its Appendix~C, which describe
the case in which the dephasing term also changes sign during the
readout OAT.

\begin{figure}[t]
    \centering
    \includegraphics[width=.49\linewidth]
    {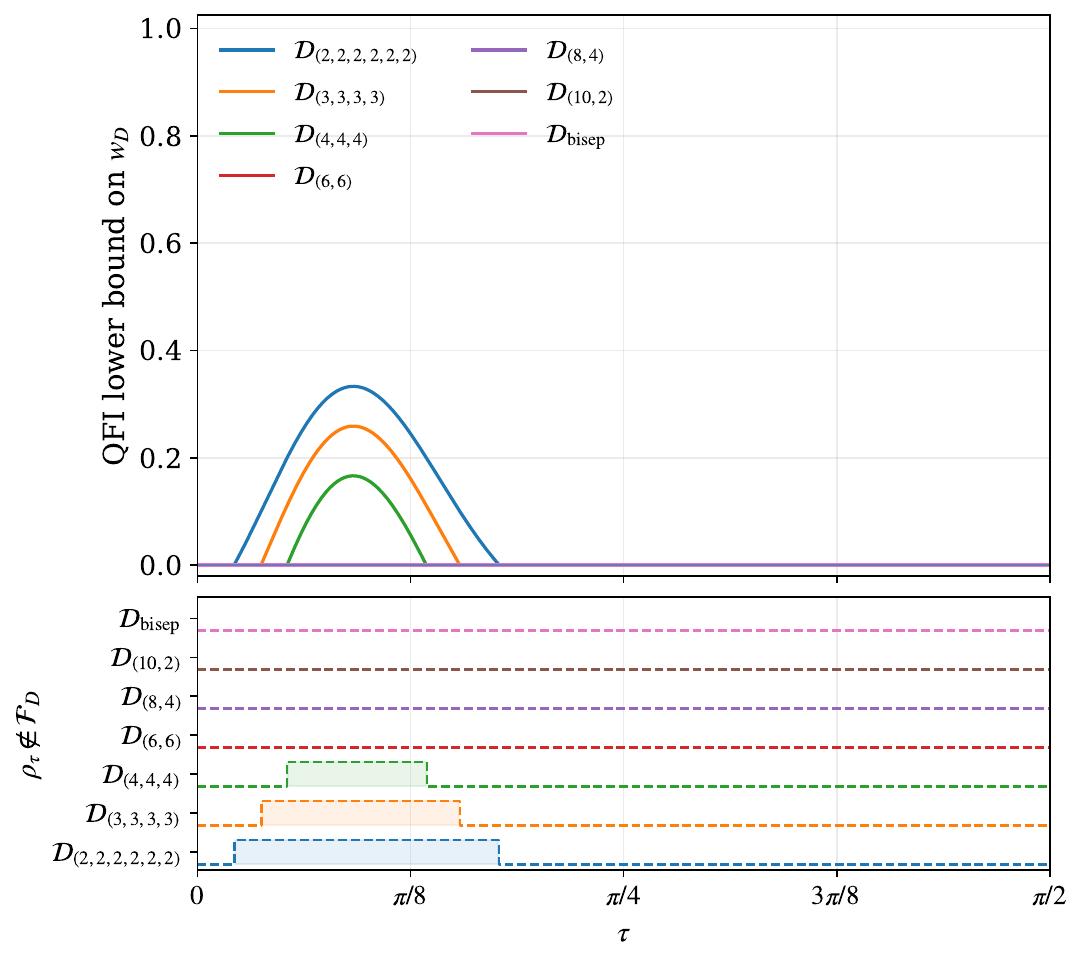}
    \hfill
    \includegraphics[width=.49\linewidth]
    {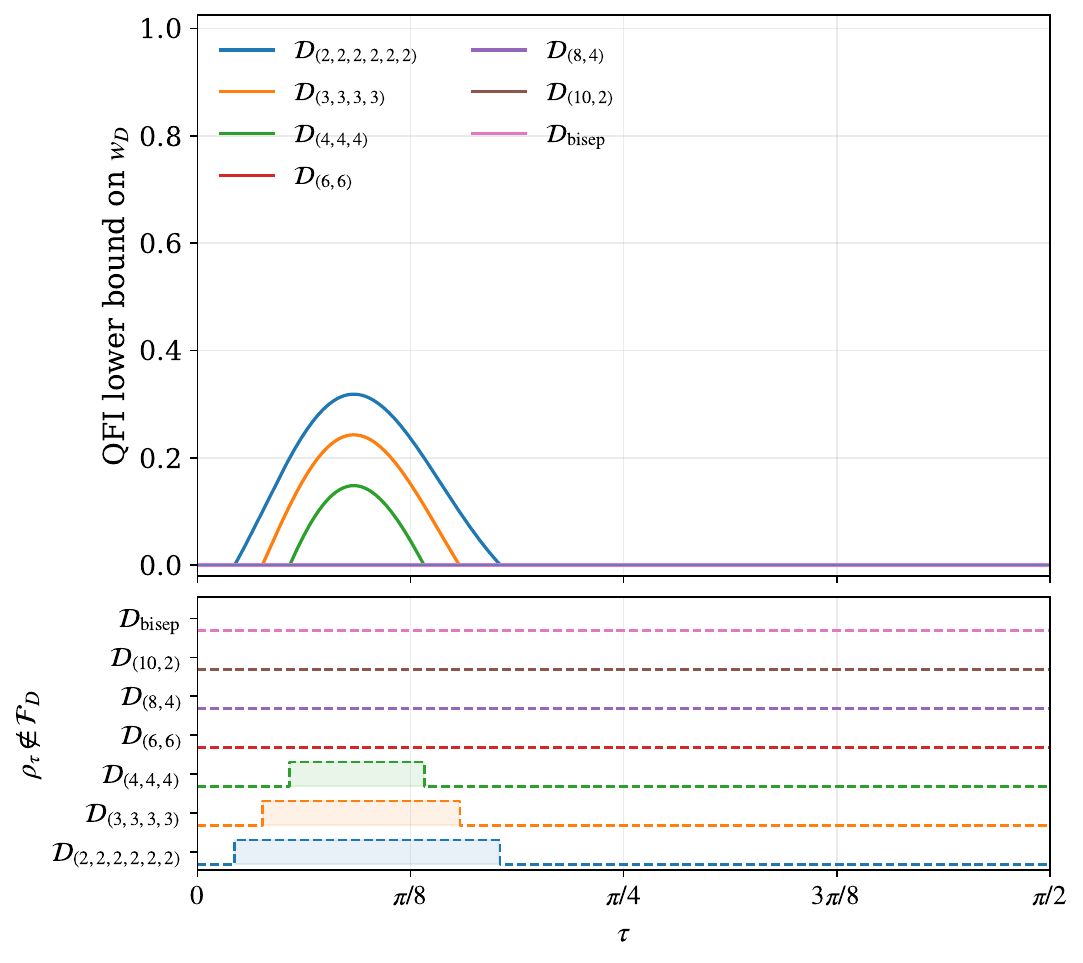}
    \caption{Architectural-weight bounds for \(N=12\) OAT states under
    collective dephasing, obtained from the MAI lower bound on the QFI:
    (left panel) ballistic dephasing with
    \(\langle\delta^2\rangle=0.01\);
    (right panel) diffusive dephasing with
    \(\epsilon=\gamma_{\rm C}/\chi=0.01\).}
    \label{fig:SM-OAT-dephasing}
\end{figure}

\subsubsection{Diffusive collective dephasing}

Diffusive collective dephasing is described, in the same dimensionless
time \(\tau\), by
\begin{equation}
\frac{d\rho}{d\tau}
=
-i[J_z^2,\rho]
+\epsilon\left(
J_z\rho J_z-\frac12\{J_z^2,\rho\}
\right),
\label{eq:SM-diffusive-master-equation}
\end{equation}
where \(\epsilon\) is the dephasing rate relative to the OAT strength.
The rotation of angle \(\theta\) is encoded after preparation, followed
by an MAI readout with inverted OAT interaction. At \(\theta=0\), the
first moments vanish and the commutator and covariance matrices are~\cite{BaamaraCRPhys2022}
\begin{equation}
\begin{aligned}
C_{\rm dif}
&=
\frac N2
\begin{pmatrix}
\displaystyle
\frac{N-1}{2}
\left(
e^{-5\epsilon\tau/2}
+e^{-\epsilon\tau/2}
\right)
\sin\tau\cos^{N-2}\tau
&
-e^{-\epsilon\tau/2}\cos^{N-1}\tau
\\
e^{-\epsilon\tau} & 0
\end{pmatrix},
\\[1mm]
\Gamma_{\rm dif}
&=
\begin{pmatrix}
\displaystyle
\frac{N(N+1)}8
-\frac{N(N-1)}8e^{-4\epsilon\tau}
&0
\\
0&\displaystyle\frac N4
\end{pmatrix}.
\end{aligned}
\label{eq:SM-diffusive-MAI}
\end{equation}
\end{document}